\documentclass[reprint,amsmath,amssymb,aps,nofootinbib]{revtex4-2}

\usepackage{graphicx}
\usepackage{dcolumn}
\usepackage{bm}

\usepackage[utf8]{inputenc}
\usepackage[english]{babel}
\usepackage[T1]{fontenc}
\usepackage{newunicodechar}
\newunicodechar{̧}{\c{}}
\usepackage{mathrsfs}
\usepackage{amsmath}
\usepackage{amssymb}
\usepackage{amsthm}
\usepackage{balance}
\usepackage[hidelinks]{hyperref}
\usepackage{lipsum}
\usepackage{xcolor}
\usepackage{tikz}
\usetikzlibrary{graphs}

\DeclareMathOperator{\Sp}{sp}
\DeclareMathOperator{\Tr}{Tr}
\DeclareMathOperator{\Span}{Span}
\DeclareMathOperator{\Deg}{deg}
\DeclareMathOperator{\Diam}{diam}
\DeclareMathOperator{\alg}{Alg}
\DeclareMathOperator{\rk}{rank}

\newcommand{\ddt}[1]{\frac{d #1}{dt}}
\newtheorem{theorem}{Theorem}[section]
\newtheorem{lemma}{Lemma}[section]
\newtheorem{corollary}{Corollary}[section]

\newtheorem{definition}{Definition}[section]

\preprint{APS/123-QED}

\begin{document}

\title{Asymptotic relaxation in quantum Markovian dynamics}

\author{Giovanni Di Meglio$^{1,2}$}
\email{giovanni.di-meglio@uni-ulm.de}
\author{Dariusz Chruściński$^{3}$}
\email{darch@fizyka.umk.pl}
\author{Koenraad Audenaert$^{1}$}
\email{koenraad.audenaert@uni-ulm.de}
\author{Martin B. Plenio$^{1,2}$}
\email{martin.plenio@uni-ulm.de}
\author{Susana F. Huelga$^{1,2}$}
\email{susana.huelga@uni-ulm.de}
 \affiliation{$^1$Institut für Theoretische Physik,
Albert-Einstein-Allee 11, Universität Ulm, D-89081 Ulm, Germany
}
\affiliation{$^2$ Center for Integrated Quantum Science and Technology (IQST), Ulm University, Albert-Einstein-Allee 11, 89081 Ulm, Germany}

\affiliation{$^3$Institute of Physics, Faculty of Physics, Astronomy and Informatics, Nicolaus Copernicus University, Grudzia̧dzka 5/7, 87-100 Toruń, Poland}

\date{\today}

             
\begin{abstract}
We investigate the long-time behavior of quantum Markovian dynamics generated by time-dependent Gorini–Kossakowski–Lindblad–Sudarshan (GKLS) master equations. 
We introduce a notion of weak relaxation and derive sufficient conditions guaranteeing asymptotic independence from the initial state. 
Our results provide a quantitative extension of the Spohn–Frigerio theorem to time-dependent generators, yielding explicit contraction bounds in terms of the instantaneous 
steady state and time-integrated dissipation rates. For a class of microscopically derived master equations, we further obtain a graph-theoretic characterization of the aforementioned conditions 
that directly links the structure of the jump operators to the relaxation properties. The general theory is illustrated by applications to driven finite-level systems, 
including a detailed three-level example, and is extended to a non-Markovian setting by means of time-local master equations that become of GKLS form at long times.
These findings pave the way for the development of a more general theory of relaxation beyond the Markovian case.
\end{abstract}

\maketitle

\section{\label{sec:intro}Introduction}

Among the different methods employed to study the behavior of open quantum systems, the analysis of dynamical equations of the system density matrix, and in particular the celebrated Gorini-Kossakowski-Lindblad-Sudarshan (GKLS) master equation, is widely used (\cite{Lindblad, Gorini, Breuer, Rivas}).
Since the early works on the topic, the investigation of the asymptotic behavior of such dissipative quantum systems has attracted considerable interest.  On the one hand, understanding the emergence of equilibrium and non-equilibrium stationary states represents a central problem in the dynamical description of large quantum systems, starting from the underlying microscopic model \cite{SpohnReview}. On the other hand, the possibility to suitably engineer reservoirs has been shown to be an efficient tool to drive quantum systems towards desired steady states, as for instance a highly entangled state, which can be used as a resource for useful quantum information tasks (\cite{PH_1,PH_2,Diehl,Kraus, Verstraete,Ticozzi, Schirmer}).

In the case of time-independent Lindbladians, a rich literature has developed results concerning uniqueness and characterization of the steady state of the dynamics \cite{Davies, Spohn_1,Spohn_2,Evans,Frigerio_1,Frigerio_2,Baumgartner_1,Baumgartner_2,Schirmer, Albert, ZhangZ, Yoshida, ZhangY}, whereas very little is known in the time-dependent case. Therefore, the theoretical investigation of the long-time behavior of general, time-dependent, open quantum systems remains a demanding problem.

Situations where the description of the open dynamics requires the use of time-dependent GKLS equations are ubiquitous in the presence of driving, such as those in quantum control theory \cite{Koch} or in quantum thermodynamics \cite{Binder}.

Currently, known results  are mainly confined to the important case of periodic driving and analysis of specific models \cite{Feldmann_1,Insinga_0,Insinga_1,Menczel, Ikeda2020,Ikeda2021}.


In order to fill this gap, the purpose of the present study is to take a step forward, providing an alternative methodology to address the independence on the initial state of the long-time dynamics for arbitrary driving.

First, we properly define a generalized form of relaxation (\emph{weak relaxation}), which does not require the existence of a well-defined state for $t\rightarrow \infty$, in contrast with the standard definition used in the context of quantum dynamical semigroups (\emph{strong relaxation}).

Second, our main results, stated in Theorem \ref{th: main_1} and \ref{th: main_2}, address the long-time dynamics by means of asymptotic analysis.

While the main theorems of this work are completely general, we consider in Section \ref{sec:appl} a specific type of GKLS master equation for driven systems, recently derived in \cite{DiMeglio2024}. In this case, we provide a direct link between the microscopic dynamics and the quantities responsible for relaxation, such as the smallest dissipation rate and the spectrum of the instantaneous steady state of the dynamics. In order to address the sufficient conditions for relaxation, we introduce a graph-theoretic construction associated with the set of jump operators and their specific structure, motivated by the microscopic Hamiltonian. Utilizing known results in the theory of graphs, we relate the algebraic conditions for relaxation with the connectivity of such graph. 

Finally, our theory is concretely applied to a driven 3-level system, which allows us to show the feasibility  of our methodology.

This paper is organized as follows. First, we expose in Section \ref{sec:summary} the general theory, whose novel results are summarized in Theorem \ref{th: main_1} and \ref{th: main_2}. 
Section \ref{sec:proof} is entirely devoted to the proof of the theorems, whereas Section \ref{sec:appl} treats in detail a specific GKLS master equation derived from a microscopic Hamiltonian. Finally, we will draw our conclusions and offer future perspectives on the present topic in Section \ref{sec:concl}. Further details can be found also in the appendices, including a complementary analysis of the relaxation in the Heisenberg picture (Appendix \ref{sec:app_Heisenberg}) and a self-contained introduction to the results concerning spectral graph theory employed in this paper (Appendix \ref{app: graph}).

\section{\label{sec:summary}General theory}

\subsection{\label{sec: preliminaries}Preliminaries and notation}
Let $\mathscr{H}$ be the Hilbert space of pure states with $\dim\mathscr{H}=N$ and $\mathscr{B}$ the set of linear operators acting on $\mathscr{H}$. The identity operators acting on $\mathscr{H}$ and $\mathscr{B}$ will be indicated, respectively, by $1_H$ and $1_B$. The spectrum of an operator $a$ is denoted by $\Sp(a)$, moreover it is convenient in our treatment to introduce the notation $m(a)=\inf \Sp(a)$ and $M(a)=\sup \Sp(a)$. The space $\mathscr{B}$ can be equipped with different norms, in particular we will make use of the relevant class of Schatten $p$-norms for $p=1,2,\infty$, namely $||a||_1=\Tr\sqrt{a^{\dagger}a}$, $||a||_2=\sqrt{\Tr(a^{\dagger}a)}$ and $||a||_{\infty}=\sup \Sp(\sqrt{a^{\dagger}a})$, where $a^{\dagger}$ denotes the adjoint element of $a$, with respect to the scalar product on $\mathscr{H}$.  
The subspace of self-adjoint elements $a=a^{\dagger}$ is indicated with $\mathscr{S}$, whereas $\mathscr{S}^+$ will denote the set of positive elements $a\ge 0$, i.e. positive semi-definite operators, $\mathscr{S}^+_1$ the set of quantum states, i.e. positive semi-definite operators with trace one, and  $\mathscr{S}_0$ the space of self-adjoint traceless operators.
Moreover, we will make use of the Hilbert-Schmidt scalar product $(a,b)_{HS}=\Tr(a^{\dagger}b)$ on  $\mathscr{B}$, with $||a||_2^2=(a,a)_{HS}$. Hence, unless differently specified, the notion of orthogonality of operators must be understood with respect to this scalar product.
For linear maps $T: \mathscr{B}\rightarrow \mathscr{B}$, we consider the submultiplicative norms induced by the Schatten p-norms, namely
\begin{equation}
||T||_{p,p}=\sup\limits_{a\neq 0}\frac{||T(a)||_p}{||a||_p},
\end{equation}
for $p=1,2,\infty$. Furthermore, due to the Hilbert-Schmidt scalar product, a notion of adjoint elements can be introduced for a superoperator $T$ as well, which we denote with $T^{\ddagger}$.

This completes our preliminary notation. 

\begin{definition}
A quantum evolution map or dynamical map $\{\Lambda_{t,0}\}_{t\ge 0}$ is a family of CPTP (completely positive, trace preserving) and Hermiticity preserving linear operators on $\mathscr{B}$.
\end{definition}

Throughout this paper we consider an important subclass of legitimate evolution maps, namely Markovian quantum processes \cite{Lindblad, Gorini, Rivas}. 

\begin{definition}
A quantum Markovian process is a dynamical map $\Big\{ \Lambda_{t,0}=\mathcal{T}\exp\int\limits_{0}^{t}ds\mathcal{L}_s \Big\}_{t\ge 0}$ generated by the time-dependent GKLS generator 
\begin{equation}
\begin{aligned}
\label{markov_eq}
\mathcal{L}_t\rho=& -\imath[H(t),\rho]+\sum_{\alpha}\gamma_{\alpha}(t)\\
&\times\Big(L_{\alpha}(t)\rho L_{\alpha}^{\dagger}(t)-\frac{1}{2}\Big\{L_{\alpha}^{\dagger}(t)L_{\alpha}(t),\rho \Big\} \Big),
\end{aligned}
\end{equation} 
with $\gamma_{\alpha}(t)\ge 0, \forall\alpha,\forall t>0$, $H(t)$ self-adjoint. 
\end{definition}
In this work, the notion of Markovianity corresponds to CP-divisibility of the dynamical map, i.e. the intermediate maps $\Lambda_{t_2,t_1}$ are CP for any $t_2\ge t_1\ge 0$ (\cite{Wolf, Rivas_NM}).

The purpose of the present work is to address the long-time behavior of the density matrix $\rho(t)=\Lambda_{t,0}\rho(0)$ for arbitrary initial states $\rho(0)\in\mathscr{S}_1^+$ and in particular to determine under which conditions the dynamics for large times is independent of the initial state. This so-called relaxation property has been explicitly defined in the context of quantum dynamical semigroups (see for example \cite{Spohn_1,Spohn_2,Frigerio_1}) and it is related, in the finite dimensional case, to the uniqueness of the steady state, namely  $\dim(\ker\mathcal{L})=1$ (\cite{Schirmer}). We can state it for arbitrary dynamical maps as
\begin{definition}
\label{def: strong}
Let $\{\Lambda_{t,0}\}_{t\ge 0}$ be a quantum dynamical map, we say that the dynamics is strongly relaxing iff there exists a unique $ \rho_{st}\in\mathscr{S}_1^+$ such that, $\forall\rho\in\mathscr{S}_1^+$, $\lim\limits_{t\rightarrow\infty}||\Lambda_{t,0}(\rho) - \rho_{st}||_1=0$.
\end{definition}

However, this definition is less natural in the presence of driving and control fields, which can prevent the system from reaching a well-defined stationary state. Therefore, the concept of relaxation must be suitably extended.

\begin{definition}
\label{def: weak}
Let $\{\Lambda_{t,0}\}_{t\ge 0}$ be a quantum dynamical map, we say that the dynamics is weakly relaxing iff there exists a trajectory $t\in \mathbb{R}^+\rightarrow \theta(t) \in\mathscr{S}_1^+$ such that, $\forall\rho\in\mathscr{S}_1^+$, $\lim\limits_{t\rightarrow\infty}||\Lambda_{t,0}(\rho) - \theta(t)||_1=0$.
\end{definition}

We underline that, while the convergence is stated in terms of the $p=1$ norm, in principle any other norm can be equally used, since we are working in finite dimensional spaces. Clearly, Definition \ref{def: strong} is a special case of Definition \ref{def: weak}. In particular, in the time-independent case the trajectory $\theta(t)$ can be identified, at least asymptotically, with the fixed point $\omega\in \ker\mathcal{L}$ of the dynamics, which is a steady state.

In the time-dependent case, however, the situation is more complicated.
Under the hypotheses of the theory that we will develop below, for which the kernel of the time-dependent generator $\mathcal{L}_t$ is at any time $t$ one-dimensional, we can
define a second type of state trajectory $\omega(t)$. For any fixed value $t=t_0$, $\omega(t_0)$ is given by the unique steady state 
of the  map $e^{\mathcal{L}_{t_0}}$, obtained from the fixed-time generator $\mathcal{L}_{t_0}$.
In the literature, $\omega(t)$ is often called the \emph{instantaneous steady state}, but one should 
not misconstrue this as implying that the trajectory
$\omega(t)$ has anything to do with the limit trajectory $\theta(t)$, or that $\omega(t)$ is the steady state of the time-dependent map $\Lambda_{t,0}$.
Nevertheless, this second state trajectory $\omega(t)$ will play a central role in our theory.


The existence of at least one non-trivial element in $\ker\mathcal{L}_t$, as well as the fact that it can be taken as positive semidefinite, represent general results in the theory of quantum dynamical semigroups (\cite{Rivas}). 
In this work, we will always assume that the map $t\rightarrow \omega(t)$ is at least differentiable for $t\ge 0$.

In addition, we need the following definitions

\begin{definition}
Given a generic set of operators $\{a_{\alpha}\}_{\alpha}$, we say that the set is
\begin{enumerate}

\item \textit{self-adjoint}: if for any element $a_{\alpha}$ in the set,  there exists some index $\beta_{\alpha}$ such that $a_{\beta_{\alpha}}=a_{\alpha}^{\dagger}$;

\item \textit{irreducible}: if the commutant of $\{a_{\alpha}\}_{\alpha}$  consists only of operators proportional to the identity, namely
\begin{equation}
\{a_{\alpha}\}_{\alpha}^{'}=\{X\in\mathscr{B}| [X,a_{\alpha}]=0, \forall \alpha \}=\Span\{1_H\},
\end{equation}
Equivalently, we can say that the commutant is isomorphic to the space of complex numbers, i.e. $\{a_{\alpha}\}_{\alpha}^{'}\simeq \mathbb{C}$.
\end{enumerate} 
\end{definition}

In the time-independent case, the fundamental result upon which this work is based is (\cite{Spohn_2,SpohnReview, Frigerio_1})

\begin{theorem}[\textbf{Spohn-Frigerio}]
\label{th_spohnfrig}
Let $\mathcal{L}$ be the generator of a CPTP semigroup and let $\mathcal{J}=\{L_{\alpha}\}_{\alpha}$ be the set of jump operators. If 
\begin{equation}
\label{eq_condSF}
\begin{aligned}
\mathcal{J} & \text{  is self-adjoint} \\
\mathcal{J}^{'}&= \Span\{1_H\}    \\
\end{aligned}
\end{equation}
then there exists a faithful (i.e. full-rank) steady state $\omega$ and the semigroup is relaxing, namely  $\lim\limits_{t\rightarrow\infty}e^{t\mathcal{L}}(\rho_0)=\omega$, $\forall\rho_0 \in\mathscr{S}_1^+$.
\end{theorem}

An alternative proof of the theorem, based on the methodology developed in this paper, can be found in Appendix \ref{sec:app_Heisenberg}.

\subsection{\label{sec: main_results}Main results}

The main results of this work provide an extension of Theorem \ref{th_spohnfrig} to the time-dependent  case.

\begin{theorem}
\label{th: main_1}
Let $\{ \mathcal{L}_t\}_{t\ge 0}$ be a family of bounded GKLS generators as in Eq.\eqref{markov_eq}, with $t\rightarrow \mathcal{L}_t$ continuous map and let $\mathcal{J}=\{L_{\alpha}(t)\}_{\alpha}$ be the set of jump operators and $\{\gamma_{\alpha}(t)\}_{\alpha}$ the set of dissipation rates. If $\forall t \in [t_0,t_1]\subseteq \mathbb{R}^+$, the conditions in Eq.\eqref{eq_condSF} are satisfied, then $\ker(\mathcal{L}_t)=\Span\{\omega(t)\}$, with $\omega(t)>0$  and the restriction to the subspace of traceless operators $\mathscr{S}_0$ of the dynamical map, $\tilde{\Lambda}_{t,t_0}$, satisfies 
\begin{equation}
\label{eq_boundLambda}
\begin{aligned}
& \exp\Big(-\int\limits_{t_0}^t ds f(s)\Big)\sqrt{\frac{m(\omega(t))}{ N} }\\
&\le||\tilde{\Lambda}_{t,t_0}||_{1,1} \\
&\le \exp\Big(-\int\limits_{t_0}^t ds F(s)\Big) \sqrt{\frac{N}{m(\omega(t_0))} } ,
\end{aligned}
\end{equation}
where 
\begin{equation}
\label{eq_boundsF}
\begin{aligned}
F(t)=&\frac{1}{2}\Big[ \frac{m(\omega(t))}{M(\omega(t))} \lambda_\text{min}(t)  +\frac{m(\dot{\omega}(t))}{m(\omega(t))}  \Big],  \\
f(t)=&  \frac{1}{2}\Big[ \frac{M(\omega(t))}{m(\omega(t))} \lambda_{\text{max}}(t)  +\frac{M(\dot{\omega}(t))}{m(\omega(t))}   \Big], \\
\end{aligned}
\end{equation}
and 
\begin{equation}
\label{eq_lambdas}
\begin{aligned}
\lambda_{\text{max}}(t)&= \sup\limits_{a\perp 1_H}\sum_{\alpha}\gamma_{\alpha}(t)\frac{||[L_{\alpha}(t),a]||_2^2}{||a||_2^2}, \\
\lambda_{\text{min}}(t)&= \inf\limits_{a\perp 1_H}\sum_{\alpha}\gamma_{\alpha}(t)\frac{||[L_{\alpha}(t),a]||_2^2}{||a||_2^2}>0. \\
\end{aligned}
\end{equation}
\end{theorem}
We observe that every quantity in Eq.\eqref{eq_boundsF} is positive, but $m(\dot{\omega}(t))<0$ (see the discussion in the proof in Section \ref{sec:proof}).
The functions $f(t),F(t)$ in Eq.\eqref{eq_boundsF} provide the time-dependent rates relative to the exponential contraction of the dynamics on the traceless subspace and become constant in the time-independent case. These rate functions contain one contribution stemming from the variation of the instantaneous steady state, via $\dot{\omega}(t)$, and another one directly related to the dissipative part of the generator, via $\lambda_{\text{max}}(t), \lambda_{\text{min}}(t)$ introduced in Eq.\eqref{eq_lambdas}. In particular, the latter quantities are eigenvalues of a superoperator associated to the GKLS generator (as explained in detail in Section \ref{sec:proof}), whereas in Section \ref{sec: lambda} an explicit derivation based on the microscopic dynamics is provided).

\begin{theorem}
\label{th: main_2}
Let $\{ \mathcal{L}_t\}_{t\ge 0}$ be a family of bounded GKLS generators as in Eq.\eqref{markov_eq}. Let us assume that there exists $t_0\ge 0$ such that, $\forall t\ge t_0$, $\mathcal{L}_t$ satisfies the hypothesis of Theorem \ref{th: main_1} and
\begin{equation}
\label{eq_condrelax}
\lim\limits_{t\rightarrow \infty}\int\limits_{t_0}^t ds F(s)=\infty.
\end{equation}
Then the dynamics is weakly relaxing and the asymptotic trajectory is given by
\begin{equation}
\theta(t)=\frac{1_H}{N}+\frac{1}{N}\int\limits_0^t ds \Lambda_{t,s}\mathcal{L}_s(1_H).
\end{equation}

\end{theorem}

Physically, the condition in Eq.\eqref{eq_condrelax} simply states that, provided the system experiences "enough dissipation" during the evolution, then the relaxation is guaranteed. We remark that, for such condition to be satisfied, it is not necessary that the function $F(t)$ is strictly positive at any time, as long as the integral diverges. Moreover, since Eq.\eqref{eq_condrelax} is a sufficient condition for relation, we emphasize that weaker conditions may exist, especially for specific classes of dynamics.

Naturally, we underline that, in the case of time-independent generator, our results contain Theorem \ref{th_spohnfrig} as limiting case (see also Appendix \ref{sec:app_Heisenberg}). 

Furthermore, we have the following 

\begin{corollary}
\label{cor_charact}
If the hypotheses of Theorem \ref{th: main_2} are satisfied and in addition $\mathcal{L}_t(1_H)=0$, then the dynamics is strongly relaxing to the maximally mixed state.
\end{corollary}

\begin{corollary}
\label{cor_nonMarkov}
Let $\{\Lambda_{t,0}\}_{t,0}$ be an arbitrary dynamical map, described by a time-local master equation $\ddt{\Lambda_{t,0}}=\mathcal{K}_t \Lambda_{t,0}$. If there exists $t_0\ge 0$ such that, $\forall t\ge t_0$, the generator $\mathcal{K}_t$ is in the GKLS form and satisfies the hypotheses in Theorem \ref{th: main_2}, then the dynamics is weakly relaxing.
\end{corollary}

Since time-local master equations with a GKLS-like structure represent a general form of non-Markovian evolution (\cite{Hall}), the latter corollary simply requires the positivity of the dissipation rates asymptotically, that is, the dynamics becomes CP-divisible for sufficiently large times.

This completes the general theory. The interested reader can find the proof in Section \ref{sec:proof}, whereas we refer to Section \ref{sec:appl} for applications.

\section{\label{sec:proof}Proof of the main results}

This section is devoted to proving the main results of this paper.

We start by introducing a modified version of the Kadison-Schwarz inequality (\cite{Kadison1952}) for CPTP maps. 

\begin{lemma}
\label{lemma_tpKadison}
Let $\Lambda: \mathscr{B}\rightarrow \mathscr{B}$ be a CPTP map with a full-rank fixed point $\omega>0$, then $\forall a\in\mathscr{B}$,
\begin{equation}
\label{eq_tpKad}
\Lambda(a^{\dagger} \sqrt{\omega})\omega^{-1} \Lambda(\sqrt{\omega}a)\le \Lambda(a^{\dagger} a).
\end{equation}
\end{lemma}
\begin{proof}
Since the map $\Lambda$ is 2-positive, we have that 
\begin{equation}
(\Lambda\otimes 1_2)(A^{\dagger}A)\ge 0,
\end{equation}
for any $A\in \mathscr{B}\otimes \mathbb{C}^2$, thus if we consider 

\begin{equation}
A=\left[
\begin{array}{cc}
a & \sqrt{\omega} \\
0 & 0 \\
\end{array}
\right],
\end{equation}

we obtain
\begin{equation}
(\Lambda\otimes 1_2)(A^{\dagger}A)=\left[
\begin{array}{cc}
\Lambda(a^{\dagger} a) &  \Lambda(a^{\dagger}\sqrt{\omega})  \\
\Lambda(\sqrt{\omega}a) &  \Lambda(\omega)\\
\end{array}
\right],
\end{equation}
where $\Lambda(\omega)=\omega$. But, utilizing the general decomposition 
\begin{equation}
\begin{aligned}
\left[
\begin{array}{cc}
Y & X^{\dagger} \\
X & Z \\
\end{array}
\right]=&
\left[
\begin{array}{cc}
1 & X^{\dagger} Z^{-1}  \\
0 & 1 \\
\end{array}
\right]
\left[
\begin{array}{cc}
Y-X^{\dagger} Z^{-1} X & 0 \\
0 & Z\\
\end{array}
\right] \\
& \times
\left[
\begin{array}{cc}
1 & 0 \\
Z^{-1} X & 1 \\
\end{array}
\right],
\end{aligned}
\end{equation}
we clearly see that positivity implies 
\begin{equation}
Y-X^{\dagger} Z^{-1}X= \Lambda(a^{\dagger} a)-\Lambda(a^{\dagger} \sqrt{\omega})\omega^{-1} \Lambda(\sqrt{\omega}a )\ge 0.
\end{equation}

\end{proof}
If $\Lambda$ is a unital map, then for $\omega=1_H$ the standard Kadison-Schwarz inequality is recovered.

This result implies some immediate consequences, summarized by the following lemma

\begin{lemma}
\label{lemma_corollary}
Let $\{\Lambda_t=e^{t\mathcal{L}}\}_{t\ge 0}$ be  a CPTP semigroup with a faithful steady state $\omega$, then 
\begin{enumerate}
 \item The generator satisfies the inequality 
\begin{equation}
\begin{aligned}
\label{eq_kadL}
&a^{\dagger} \omega^{-1/2}\mathcal{L}(\omega^{1/2}a)+\mathcal{L}(a^{\dagger} \omega^{1/2})\omega^{-1/2}a -\mathcal{L}(a^{\dagger}a) \\
&=-\sum_{\alpha}\gamma_{\alpha}[L_{\alpha},a^{\dagger}\omega^{-1/2}]\omega[L_{\alpha},a^{\dagger}\omega^{-1/2}]^{\dagger}\le 0.
\end{aligned}
\end{equation} 
\item Let us consider the map
\begin{equation}
\label{eq_resc}
\Lambda_t^{\natural}=\Omega^{-1}\Lambda_t \Omega=e^{t \Omega^{-1}\mathcal{L}\Omega},
\end{equation}
where $\Omega(a)=\omega^{1/2} a$, then $\Lambda_t^{\natural}$ is a contraction in the Hilbert-Schmidt norm, namely $||\Lambda_t^{\natural}(a)||_2\le ||a||_2$.

\item Let $\mathscr{B}_{\omega}=\Span\{\omega^{1/2}\}^{\perp}$ be the orthogonal space to $\omega^{1/2}$, then this is an invariant subspace under $\Omega^{-1}\mathcal{L}\Omega(\mathscr{B}_{\omega})\subseteq \mathscr{B}_{\omega}$.
\end{enumerate}

\end{lemma}

\begin{proof}

\begin{enumerate}

\item Eq.\eqref{eq_kadL} follows from a direct calculation, making use of the fact that $\mathcal{L}(\omega)=0$, whereas the inequality in the operator sense follows directly from the expansion for $t\rightarrow 0$ of Eq.\eqref{eq_tpKad}.

\item  Let $a\in\mathscr{B}$, then Eq.\eqref{eq_tpKad} can be expressed as $(\Lambda_t^{\natural}(a))^{\dagger} \Lambda^{\natural}_t(a)\le \Lambda_t(a^{\dagger}a)$ and we have 
\begin{equation}
\begin{aligned}
&||\Lambda^{\natural}_t(a)||_2^2=\Tr((\Lambda_t^{\natural}(a))^{\dagger} \Lambda^{\natural}_t(a) ) \\
&\le \Tr(\Lambda_t(a^{\dagger}a))= \Tr(a^{\dagger}a)=||a||_2^2.
\end{aligned}
\end{equation}

\item 
The invariant subspace property follows from the fact that $\Tr(\omega^{1/2}\Omega^{-1}\mathcal{L}\Omega(a))=\Tr(\mathcal{L}\Omega(a))=0$,  $\forall a\in\mathscr{B}$, since $\mathcal{L}$ is a trace-annihilating operator.

\end{enumerate}

\end{proof}

Moreover, we will make use of this important result (\cite{Gronwall,Bainov} )

\begin{lemma}{\textbf{(Grönwall's inequality)}}
\label{lemma_gron}
Let $I=[a,b]\subset\mathbb{R}$ be an interval, with $a<b$ and $b$ may also be infinite. Let $\beta, y: I\rightarrow \mathbb{R}$ continuous functions, with $y$ differentiable on $I_0$, the interior part of $I$, and such that $\frac{dy}{dt}(t)\le \beta(t) y(t),\forall t\in I_0$, then 
$y(t)\le y(a)\exp\int\limits_a^t ds \beta(s)$, $\forall t\in I$.
\end{lemma}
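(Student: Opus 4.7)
The plan is to use the standard integrating factor argument. I would introduce the auxiliary function
\begin{equation}
u(t)=y(t)\exp\left(-\int_a^t \beta(s)\,ds\right),
\end{equation}
which is well-defined and continuous on $I$ (since $\beta$ is continuous and hence integrable on every compact subinterval) and differentiable on $I_0$. The goal is to show that $u$ is monotonically non-increasing, which immediately yields the desired bound.

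First I would compute the derivative on $I_0$. By the product rule and the fundamental theorem of calculus,
\begin{equation}
\frac{du}{dt}(t)=\Bigl(\frac{dy}{dt}(t)-\beta(t)y(t)\Bigr)\exp\left(-\int_a^t \beta(s)\,ds\right).
\end{equation}
The exponential factor is strictly positive, and by the hypothesis $\frac{dy}{dt}(t)-\beta(t)y(t)\le 0$ on $I_0$. Hence $\frac{du}{dt}(t)\le 0$ throughout the interior, so $u$ is non-increasing on $I_0$. By continuity of $u$ on all of $I$, the monotonicity extends to the closed interval: $u(t)\le u(a)$ for every $t\in I$.

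Finally, evaluating at $t=a$ gives $u(a)=y(a)$ since the integral vanishes, so
\begin{equation}
y(t)\exp\left(-\int_a^t \beta(s)\,ds\right)\le y(a),
\end{equation}
and multiplying both sides by the positive quantity $\exp(\int_a^t \beta(s)\,ds)$ yields the claim. The only mildly delicate point is the case $b=\infty$: the argument above applies on each compact $[a,T]\subset I$, and since $T$ is arbitrary the bound holds pointwise on all of $[a,\infty)$. I do not expect any real obstacle here; the main care needed is simply to check that continuity suffices to pass the strict interior inequality to the boundary point $t=a$ (and to any $t\in I$ in general).
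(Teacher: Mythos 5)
Your proof is correct. Note that the paper does not prove this lemma at all: it is quoted as a standard result (the differential form of Gr\"onwall's inequality) with citations to the literature, so there is no in-paper argument to compare against. Your integrating-factor argument is precisely the textbook proof: setting $u(t)=y(t)\exp\bigl(-\int_a^t\beta(s)\,ds\bigr)$, the hypothesis gives $u'\le 0$ on the interior, and monotonicity on all of $I$ follows. The one place where your wording is slightly loose is the phrase ``by continuity the monotonicity extends to the closed interval'': you do not actually need a separate limiting step, since the mean value theorem applied to $u$ on $[a,t]$ only requires continuity on the closed interval and differentiability on the open one, which is exactly what the hypotheses provide; this also handles $t=a$ and, as you note, the case $b=\infty$ reduces to compact subintervals $[a,T]$. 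So the argument is complete as stated, modulo that small tightening.
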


We are now ready to prove Theorem \ref{th: main_1}.

\begin{proof}

Let $\mathcal{L}_t$ be a GKLS generator as in the hypothesis, then for any fixed $t$, Theorem \ref{th_spohnfrig} guarantees the existence of a unique full-rank element in the kernel $\omega(t)>0$ (see Appendix \ref{sec:app_Heisenberg} for an alternative proof). For simplicity, we can assume  $t_0=0$, without loss of generality.

Let $\tilde{\Lambda}_{t,s}: \mathscr{S}_0\rightarrow \mathscr{S}_0$ be the restriction of the dynamical map to the subspace of traceless (self-adjoint) operators, then $\tilde{\rho}(t)=\tilde{\Lambda}_{t,0}(\tilde{\rho}_0)$.
We consider the change of variable given by $\tilde{\rho}(t)=\omega^{1/2}(t)\sigma(t)\equiv \Omega_t(\sigma(t))$, which leads to the differential problem
\begin{equation}
\label{eq_sigma}
\ddt{\sigma(t)}=\Omega_t^{-1}\mathcal{L}_t \Omega_t (\sigma(t)) + \ddt{\Omega_t^{-1}}\Omega_t (\sigma(t)).
\end{equation}
By construction, $\sigma(t)\in\mathscr{B}_{\omega(t)}=\Span\{\omega^{1/2}(t)\}^{\perp}$ and we observe that in general it is not self-adjoint. From Eq.\eqref{eq_sigma}, we can derive
\begin{equation}
\begin{aligned}
\ddt{||\sigma(t)||_2^2}=& 2\Re\Big(\sigma(t),\Omega^{-1}_t\mathcal{L}_t\Omega_t(\sigma(t))\Big)_{HS} \\
&- \Big(\omega^{-1/2}(t)\sigma(t),\frac{d \omega(t) }{dt}\omega^{-1/2}(t)\sigma(t) \Big)_{HS}, \\
\end{aligned}
\end{equation}
where we employed the identity
\begin{equation}
\begin{aligned}
&\omega^{-1/2}(t)\frac{d\omega(t)}{dt}\omega^{-1/2}(t)\\ &=\omega^{-1/2}(t)\Big\{ \frac{d\omega^{1/2}(t)}{dt}, \omega^{1/2}(t) \Big\} \omega^{-1/2}(t) \\
&= - \omega^{1/2}(t)\frac{d\omega^{-1/2}(t)}{dt}-\frac{d\omega^{-1/2}(t)}{dt}\omega^{1/2}(t).
\end{aligned}
\end{equation}
Making use of the expression obtained by taking the trace in Eq.\eqref{eq_kadL}, together with the fact that $\Tr\mathcal{L}(\sigma^{\dagger}\sigma)=0$, we arrive at
\begin{equation}
\label{eq_sigmanorm}
\begin{aligned}
\ddt{||\sigma(t)||_2^2}=&-\sum_{\alpha}\gamma_{\alpha}(t)|| [L_{\alpha}(t),\sigma^{\dagger}(t)\omega^{-1/2}(t)]\omega^{1/2}(t)||_{2}^2\\
&- \Big(\omega^{-1/2}(t)\sigma(t),\frac{d \omega(t) }{dt}\omega^{-1/2}(t)\sigma(t) \Big)_{HS} \\
\end{aligned}
\end{equation}
We want to provide an upper and lower bound to the right-hand side. To this end, it is convenient to utilize the relation (we recall the notation $m(X)=\inf \Sp(X)$, $M(X)=\sup \Sp(X)$)
\begin{equation}
\label{eq_boundX}
m(X)||a||_2^2 \le \Tr(a^{\dagger} X a) \le M(X)||a||_2^2, \\
\end{equation}
for $X$ self-adjoint, as well as the identity
\begin{equation}
||[L_{\alpha},a]||_2^2= \Tr(a^{\dagger} [L_{\alpha}^{\dagger},[L_{\alpha},a]]),
\end{equation}
to obtain the inequalities
\begin{equation}
\label{eq_firstK}
\begin{aligned}
&-M(\omega(t))\Big(\sigma(t)^{\dagger}\omega^{-1/2}(t),\mathcal{K}_t(\sigma(t)^{\dagger}\omega^{-1/2}(t))\Big)_{HS} \\
& \le  -\sum_{\alpha}\gamma_{\alpha}(t)|| [L_{\alpha}(t),\sigma(t)^{\dagger}\omega^{-1/2}(t)] \omega^{1/2}(t)||_{2}^2 \\
&\le -m(\omega(t))\Big(\sigma(t)^{\dagger}\omega^{-1/2}(t),\mathcal{K}_t(\sigma(t)^{\dagger}\omega^{-1/2}(t))\Big)_{HS},
\end{aligned}
\end{equation}
where 
\begin{equation}
\label{eq_Kdef}
\mathcal{K}_t(a)=\sum_{\alpha}\gamma_{\alpha}(t)[L_{\alpha}^{\dagger}(t),[L_{\alpha}(t),a]].
\end{equation} 
$\mathcal{K}_t$ is a self-adjoint operator, hence its spectrum can be bounded by the supremum and infimum of the expression $\frac{(a,\mathcal{K}_t(a))_{HS}}{||a||_2^2}$. In particular, by second hypothesis in Eq.\eqref{eq_condSF}, $(a,\mathcal{K}_t(a))_{HS}=\sum_{\alpha}\gamma_{\alpha}(t)||[L_{\alpha}(t),a]||_2^2=0$ implies that $a\in \Span\{1_H\}$. Therefore, the infimum on the subset of operators $\sigma^{\dagger}(t)\omega^{-1/2}(t)$ must be strictly positive, given that $\sigma^{\dagger}(t)\omega^{-1/2}(t)\in \Span\{1_H\}$ implies that $\sigma^{\dagger}(t)\in \Span\{\omega^{1/2}(t)\}$, which is not possible by construction.
Hence, the expressions in Eq.\eqref{eq_lambdas} are obtained. 

Therefore, we have
\begin{equation}
\begin{aligned}
&-\frac{M(\omega(t)) \lambda_{\text{max}}(t) }{m(\omega(t))}  ||\sigma(t)||_2^2 \\
& \le  -\sum_{\alpha}\gamma_{\alpha}(t)|| [L_{\alpha}(t),\sigma(t)^{\dagger}\omega^{-1/2}(t)] \omega^{1/2}(t)||_{2}^2  \\
&\le -\frac{m(\omega(t)) \lambda_{\text{min}}(t) }{M(\omega(t))}||\sigma(t)||_2^2.
\end{aligned}
\end{equation}
For the other term in Eq.\eqref{eq_sigmanorm}, we can use the spectrum of $\dot{\omega}(t)$. Since $\Tr(\dot{\omega}(t))=0$, we have that $\inf \Sp(\dot{\omega}(t))\le 0$, $\sup \Sp(\dot{\omega}(t))\ge 0$, where the equality is clearly attained if and only if $\dot{\omega}(t)=0$. As a consequence, we have
\begin{equation}
\begin{aligned}
&- \frac{|m(\dot{\omega})(t)|}{m(\omega)(t)}  ||\sigma(t)||_2^2 \\
&\le (\omega^{-1/2}(t)\sigma(t),\frac{d \omega(t) }{dt}\omega^{-1/2}(t)\sigma(t)))_{HS} \\
&\le \frac{M(\dot{\omega})(t)}{m(\omega) (t)}  ||\sigma(t)||_2^2.
\end{aligned}
\end{equation}
Including these inequalities in Eq.\eqref{eq_sigmanorm} and utilizing Grönwall's Lemma \ref{lemma_gron}, we obtain
\begin{equation}
\label{eq_boundsigma}
\exp\Big(-\int\limits_0^t ds f(s) \Big) \le \frac{||\sigma(t)||_2}{||\sigma(0)||_2}\le  \exp\Big(-\int\limits_0^t ds F(s) \Big),
\end{equation}
where $f(t),F(t)$ are given in Eq.\eqref{eq_boundsF}.
Finally, due to the general property of the Schatten norms $||ab||_{1}\le ||a||_p ||b||_q$ with $1/p+1/q=1$, and utilizing $||a||_2\le ||a||_1\le \sqrt{\rk(a)}||a||_2\le \sqrt{N}||a||_2$, we have 
\begin{equation}
\begin{aligned}
||\tilde{\rho}(t)||_1 & \le \underbrace{||\omega^{1/2}(t)||_{1}}_{\le \sqrt{N}}\underbrace{||\sigma(t)||_{\infty} }_{\le ||\sigma(t)||_{2}}\le \\
&\le \exp\Big(-\int\limits_0^t F(s)ds \Big)  \sqrt{N} \underbrace{||\sigma(0)||_2}_{\le ||\sigma(0)||_1} \\
& \le  \exp\Big(-\int\limits_0^t F(s)ds \Big)\sqrt{N}  ||\omega^{-1/2}(0)||_{\infty} ||\tilde{\rho}(0)||_{1},
\end{aligned}
\end{equation}
and, similarly, for the lower bound we have the chain of inequalities
\begin{equation}
\begin{aligned}
||\tilde{\rho}(t)||_1\ge & ||\omega^{1/2}(t)\sigma(t)||_2 \\
\ge & \sqrt{m(\omega(t))}\exp\Big(-\int\limits_0^t ds f(s)\Big)||\omega^{-1/2}(0)\tilde{\rho}(0)||_2 \\
\ge & \sqrt{m(\omega(t))}\exp\Big(-\int\limits_0^t ds f(s)\Big)\frac{||\tilde{\rho}(0)||_2}{\sqrt{M(\omega(0))}} \\
\ge & \sqrt{m(\omega(t))}\exp\Big(-\int\limits_0^t ds f(s)\Big)\frac{||\tilde{\rho}(0)||_1 N^{-1/2}}{\sqrt{M(\omega(0))}},
\end{aligned}
\end{equation}
which leads to Eq.\eqref{eq_boundLambda}.

\end{proof}

Theorem \ref{th: main_2} follows as a consequence.

\begin{proof}
Let $\mathcal{L}_t$ be a time-dependent GKLS generator, we consider the differential problem given by
\begin{equation}
\ddt{\rho(t)}=\mathcal{L}_t(\rho(t)) 
\end{equation}
for generic $\rho_0=\rho(0)\in \mathscr{S}^+_1$. Since $\Tr(\rho(t))=1, \forall t\ge 0$,  we can uniquely decompose the density matrix, utilizing the Hilbert-Schmidt scalar product, as $\rho(t)=\frac{1_H}{N}+\tilde{\rho}(t)$, where $\tilde{\rho}(t)$ belongs to the subspace of traceless operators, which is orthogonal to $\Span\{1_H\}$. Plugging this  decomposition in the master equation for $\rho(t)$, it is not difficult to see that the general solution can be expressed in the form
\begin{equation}
\label{eq_rhodecomp}
\rho(t)=\Lambda_{t,0}(\tilde{\rho}_0)+\underbrace{\frac{1_H}{N}+\frac{1}{N}\int\limits_0^t ds \Lambda_{t,s}\mathcal{L}_s(1_H)}_{=\theta(t)}.
\end{equation}
The contribution $\theta(t)$ in Eq.\eqref{eq_rhodecomp} represents the asymptotic trajectory, which satisfies the time-dependent master equation $\ddt{\theta(t)}=\mathcal{L}_t(\theta(t))$ for $\theta(0)=1_H/N\in\mathscr{S}_1^+$, hence $\theta(t)\in\mathscr{S}_1^+$, $\forall t\ge 0$.
The first term contains the dependence on the initial state, via its projection onto the subspace of traceless operators $\mathscr{S}_0$, which is an invariant subspace under the TP map $\Lambda_{t,0}$. Therefore, let $\tilde{\Lambda}_{t,0} $ be the restriction of the dynamical map to the subspace $\mathscr{S}_0$, then in order to prove relaxation we have to show that 
\begin{equation}
\lim\limits_{t\rightarrow \infty}||\rho(t)-\theta(t)||=\lim\limits_{t\rightarrow \infty}||\tilde{\Lambda}_{t,0} (\tilde{\rho}_0)||=0,
\end{equation}
$\forall \tilde{\rho}_0$. 

Because the hypotheses in Theorem \ref{th: main_1} are satisfied for $t\ge t_0\ge 0$, we have that 
\begin{equation}
\begin{aligned}
||\tilde{\Lambda}_{t,0}||_{1,1} &\le ||\tilde{\Lambda}_{t,t_0}||_{1,1} \underbrace{||\tilde{\Lambda}_{t_0,0}||_{1,1}}_{\le 1} \\
 &\le\exp\Big( -\int\limits_{t_0}^t ds F(s)\Big)  \sqrt{\frac{N}{m(\omega(t_0))} }\rightarrow 0,
\end{aligned}
\end{equation}
which provides the relaxation.

\end{proof}

\section{\label{sec:appl}Application to driven N-level systems}

In order to illustrate some instances of the theory so far exposed, we will discuss how the bounds derived in Section \ref{sec:summary} apply to specific master equations. For concreteness, we will consider the microscopic dynamics described  in \cite{DiMeglio2024}, although other derivations can also be utilized (as for instance \cite{Albash2012, Dann2018, Mozgunov2020}). To begin with, we will keep our derivation completely general and obtain some  properties valid for fairly arbitrary systems.

Let us consider a $N$-level system subject to driving and interacting with an environment, as described by the microscopic Hamiltonian
\begin{equation}
H(t)=H_S(t)+H_E+H_I,
\end{equation}
where $H_S(t)=\sum_n E_n(t)|n(t)\rangle\langle n(t)|$ is an arbitrary time-dependent Hamiltonian for the system, with $t\rightarrow H_S(t)$ analytic function. Moreover, we assume that the spectrum contains no degeneracies, at least for sufficiently large $t$. Clearly, since we are interested in the long-time dynamics, degeneracies and level crossings at the early stages of the evolution are not important for our analysis. 

As customary, the environment is given by a continuum of bosonic modes with annihilation and creation operators $b_j(k),b_j^{\dagger}(k)$, where $[b_i(k),b_j^{\dagger}(p)]=\delta_{ij}\delta(k-p)$, hence the free environment Hamiltonian and the interaction term read
\begin{equation}
\label{eq_bosonicbath}
\begin{aligned}
H_E=&\sum_j \int\limits_0^{\infty} dk \omega(k)b_j^{\dagger}(k)b_j(k), \\
H_I=&\sum_j A_j\otimes \underbrace{\int\limits_0^{\infty} dk g_j(k)(b_j(k)+b_j^{\dagger}(k))}_{=B_j}
\end{aligned}
\end{equation}
with $A_j=A^{\dagger}_j$. The index $j$ is typically associated to some degree of freedom in the system, as for instance the $j$-th site of a lattice. More general forms of interaction are equally admissible.

Following the prescriptions in \cite{DiMeglio2024}, one derives a time-dependent master equation in the interaction picture, with generator
\begin{equation}
\label{eq_tdme}
\begin{aligned}
\mathcal{L}_{I,t}(\rho)=&-\imath[H_{I,LS}(t),\rho] \\
&+\sum_{\alpha}\gamma_{\alpha}(t) \Big(L_{I,\alpha}(t)\rho L_{I,\alpha}^{\dagger}(t) \\
&-\frac{1}{2}\{L_{I,\alpha}^{\dagger}(t) L_{I,\alpha}(t),\rho\} \Big),
\end{aligned}
\end{equation}
where $\alpha=(j,n, m)$ is a composite index. The dissipation rates are obtained from the identity
\begin{equation}
\label{eq_dissratesdef}
\begin{aligned}
\gamma_{(j,n,m)}(t)= \int\limits_{-\infty}^{\infty} dx  \Tr_E\Big(e^{\imath H_E x}B_j e^{-\imath H_E x} B_j \rho_{E}\Big) e^{\imath x(\Omega_{nm}(t))} \\
\end{aligned}
\end{equation}
with $\rho_E$ the initial state of the environment, while
\begin{equation}
\label{eq_lambshift}
H_{I,LS}(t)=\sum_{j,n,m} S_{(j,n,m)}(t) L_{I,(j,n,m)}^{\dagger}(t)L_{I,(j,n,m)}(t)
\end{equation}
and $S_{(j,n,m)}(t) $ are extracted from the one-sided Fourier transform of the environment correlation functions $\Tr_E\Big(e^{\imath H_E x}B_j e^{-\imath H_E x} B_j \rho_{E}\Big)$ (see \cite{DiMeglio2024} for more details).
The jump operators stem from the decomposition
\begin{equation}
\label{eq_jumpandhls}
\begin{aligned}
V^{\dagger}(t) A_j  V(t)=\sum_{n,m} \exp\Big(-\imath \int\limits_0^t ds (\underbrace{E_n(s)-E_m(s)}_{=\Omega_{nm}(s)})\Big) \\
\times \underbrace{|m(0)\rangle\langle m(t)|A_j|n(t)\rangle\langle n(0)|}_{=L_{I,(j,n,m)}(t)},
\end{aligned}
\end{equation}
with $V(t)=\sum_n \exp\Big(-\imath\int\limits_0^t ds E_n(s)\Big)|n(t)\rangle\langle n(0)|$. 

We underline that such decomposition must be performed in a way that the contributions $|m(0)\rangle\langle m(t)|A_j|n(t)\rangle\langle n(0)|$ characterized by the same Bohr frequency $\Omega_{nm}(t)$ are grouped together, in order to ensure the correct canonical GKLS form (for the time-independent case, this fact is  discussed in standard textbooks such as \cite{Breuer}). In particular, this implies that no degeneracies occur in the set $\{\Omega_{nm}(t)\}_{n,m}$. 

Therefore, we can identify three types of jumps depending on the Bohr frequencies, namely:
\begin{enumerate}
\item $\Omega_{nm}=0$. In this case, the absence of degeneracies guarantees that the jumps are diagonal in the eigenbasis $\{|n\rangle\equiv |n(0)\rangle\}_n$. Introducing the index $\alpha_0=(j,0,0)$, they read
\begin{equation}
\label{eq_jumpdiag}
L_{I,\alpha_0}=\left(
\begin{array}{cccc}
* & 0 &  0 & 0\\
0 & * & 0 & 0\\
0 & 0 &  \ddots  & 0 \\
0 & 0 &  0 &  *\\
\end{array}
\right).
\end{equation}
These jumps are physically responsible for dephasing in the eigenbasis of $H_S(0)$.
\item $\Omega_{nm}>0$. These jumps labeled by $\alpha_+=(j,n,m)$ are upper-triangular with zeros on the diagonal, namely 
\begin{equation}
\label{eq_jumpoffdiag}
L_{I,\alpha_+}=\left(
\begin{array}{cccc}
0 & * & *  & * \\
0 & 0 & *  &  * \\
0 & 0 &  \ddots & *  \\
0 & 0 &  0 & 0\\
\end{array}
\right).
\end{equation}

We observe that the hypothesis of non-degeneracy requires that the same jump cannot 
induce transitions from a given state $|n\rangle$ to different $|i\rangle,|j\rangle$, for $i\neq j$, given that the condition $\Omega_{ni}=\Omega_{nj}$ implies  $E_i=E_j$. Therefore, the upper-triangular structure 
is further simplified by the fact that each row and column can contain at most one non-zero entry.
Naturally, even in the presence of degeneracies, this structure can be achieved if the corresponding matrix elements of $A_j$ are zero. In the following, we will additionally assume that each jump $L_{I,\alpha_+}$ has exactly one non-zero entry, namely it can induce incoherent transitions between two specific states $|n_{\alpha}\rangle, |m_{\alpha}\rangle $, for $n_{\alpha}\neq m_{\alpha}$. 
\item $\Omega_{nm}<0$. These are given by $\Omega_{nm}=-\Omega_{mn}$, and  $L_{I,\alpha_-}=L_{I,\alpha_+}^{\dagger}$, i.e. lower-triangular with zeros on the main diagonal, where $\alpha_-=(j,n,m)$ corresponds to $\alpha_+=(j,m,n)$. 
\end{enumerate}

We observe that the analysis of the asymptotic dynamics in the interaction picture provides automatically information on the behavior of the state in the Schrödinger picture. In particular, if  the system density matrix in interaction picture $\rho_{I, S}(t)\rightarrow \theta_{I}(t)$ asymptotically, then $\rho_S(t)\rightarrow U_S(t)\theta_{I}(t)U_S^{\dagger}(t)$, where $U_S(t)$ is the time-evolution operator generated by $H_S(t)$. Therefore,  we can simply address the problem for $\rho_{I, S}(t)$, taking advantage of the simpler form of the interaction picture generator.
Furthermore, we notice that similar results as the ones discussed in this section hold also for the adiabatic master equation \cite{Albash2012}.

In order to address the asymptotic relaxation by means of our results, we need to characterize three main features: the instantaneous steady state $\omega_{I}(t)$, the conditions in Eq.\eqref{eq_condSF} on the jump operators and the quantities in Eq.\eqref{eq_lambdas}. We will tackle these points in each of the following subsections.

\subsection{\label{sec: omega}Instantaneous steady state}

The results we are going to prove are related to the structure of the instantaneous steady state $\omega_{I}(t)$ and its spectrum.
\begin{lemma}
\label{lemma_omega_1}
Let $\mathcal{L}_{I,t}$ be a GKLS generator as in Eq.\eqref{eq_tdme} and let assume that the conditions in Eq.\eqref{eq_condSF} are satisfied. Then the unique instantaneous steady state is
\begin{equation}
\label{eq_omegastruct}
\omega_{I}(t)=\sum_n \omega_n(t) P_n,
\end{equation}
where $P_n=|n(0)\rangle\langle n(0)|$ are the orthogonal projectors given by the eigendecomposition of $H_S(0)$ and the eigenvalues are
 the solution of the homogeneous equation
\begin{equation}
\label{eq_homomega}
\sum_m \omega_m(t)\Big(\Tr(\Phi_t^{\ddagger}(P_n)P_m)-\delta_{nm}\Tr(\Phi_t(P_n))\Big)=0,
\end{equation}
with $\Phi_t(a)=\sum_{\alpha}\gamma_{\alpha}(t) L_{I,\alpha}(t)a L_{I,\alpha}^{\dagger}(t)$.
In particular
\begin{equation}
\label{eq_boundomegasp}
\frac{m(\Phi_t^{\ddagger}(P_n))}{\Tr(\Phi_t(P_n))} \le \omega_n(t)\le \frac{M(\Phi_t^{\ddagger}(P_n))}{\Tr(\Phi_t(P_n))}.
\end{equation}

\end{lemma}

\begin{proof}
Let us prove Eq.\eqref{eq_omegastruct} first.
One can easily see that the generator  in Eq.\eqref{eq_tdme} is invariant under any unitary superoperator in the form $\mathcal{W}_t(a)=W(t)aW^{\dagger}(t)$, where $W(t)=\sum_n e^{-\imath \alpha_n(t)}P_n$ and $\alpha_n(t)\in\mathbb{R}$, given that 
\begin{equation}
\mathcal{W}_t(L_{I,(j,n,m)}(t))=e^{\imath(\alpha_n(t)-\alpha_m(t))} L_{I,(j,n,m)}(t).
\end{equation}
But, since $[\mathcal{L}_{I,t},\mathcal{W}_t]=0$, this implies that $\mathcal{W}_t(\omega_{I}(t))\in \ker\mathcal{L}_{I,t}$, namely $\mathcal{W}_t(\omega_{I}(t))=\omega_{I}(t)$ (due to trace preservation).
Hence,  $[W(t),\omega_{I}(t)]=0$ and thus $\omega_{I}(t)=\sum_n \omega_n(t)P_n$.

Now we move to Eq.\eqref{eq_homomega}. Because $[H_{I,LS}(t),P_n]=0$ by construction (see Eq.\eqref{eq_lambshift}), then 
\begin{equation}
\mathcal{L}_{I,t}(\omega_{I}(t))=\Phi_t(\omega_{I}(t))-\frac{1}{2}\{ \Phi_t^{\ddagger}(1_H),\omega_{I}(t)\}.
\end{equation}
Due to the conditions in Eq.\eqref{eq_condSF}, we can prove that $\Phi_t^{\ddagger}(1_H)>0$. In fact, if there exists $\psi\in\mathscr{H}$ such that $\langle\psi|\Phi_t^{\dagger}(1_H)|\psi\rangle=0$, then $L_{I,\alpha}(t)|\psi\rangle=0, L_{I,\alpha}^{\dagger}(t)|\psi\rangle=0$, $\forall\alpha$, which implies that the commutant contains the non-trivial element $|\psi\rangle\langle\psi|$, that is a contradiction. Therefore, we have that
\begin{equation}
\begin{aligned}
0 &=\Tr\Big(P_n\Big(\Phi_t(\omega_{I}(t))-\frac{1}{2}\{ \Phi_t^{\ddagger}(1_H),\omega_{I}(t)\}\Big) \Big) \\
& =\Tr(\Phi_t^{\ddagger}(P_n)\omega_{I}(t))-\omega_n(t)\Tr(P_n\Phi_t^{\ddagger}(1_H)),
\end{aligned}
\end{equation}
which implies Eq.\eqref{eq_homomega} and Eq.\eqref{eq_boundomegasp}.
\end{proof}

In particular, this lemma shows that the Hamiltonian term $H_{I,LS}(t)$ does not contribute to determining the instantaneous steady state.

Furthermore, when $\rho_E$ is a thermal state and the environment correlation functions satisfy the KMS condition, the generator $\mathcal{L}_{I,t}$ always possesses a relevant type of zero eigenoperator, which is a thermal state.

\begin{lemma}
\label{lemma_fixedth}
Let $\mathcal{L}_{I,t}$ be the GKLS generator in Eq.\eqref{eq_tdme} and let us assume that the dissipation rates in Eq.\eqref{eq_dissratesdef} satisfy the detailed balance condition
\begin{equation}
\label{eq_detailedbalance}
\gamma_{(j,n,m)}(t)=e^{\beta (E_n(t)-E_m(t))}\gamma_{(j,m,n)}(t).
\end{equation}
Then, the Gibbs state 
\begin{equation}
\omega_{I,\text{th}}(t)=\frac{e^{-\beta H_{\text{avg}}(t)}}{\Tr_Se^{-\beta H_{\text{avg}}(t)}},
\end{equation} 
with respect to the Hamiltonian
\begin{equation}
\label{eq_avgH}
H_{\text{avg}}(t)=\sum_n E_n(t)|n(0)\rangle\langle n(0)|,
\end{equation}
is an instantaneous steady state.
\end{lemma}

\begin{proof}
The proof follows the standard argument for the time-independent case (see \cite{Rivas, Breuer}). One can easily see that the jump operators $L_{I,(j,n,m)}$ in Eq.\eqref{eq_jumpandhls} satisfy the relation
\begin{equation}
L_{I,(j,n,m)}\omega_{I,\text{th}}=e^{-\beta(E_n(t)-E_m(t))}\omega_{I,\text{th}}L_{I,(j,n,m)},
\end{equation}
which, combined with Eq.\eqref{eq_detailedbalance}, provides that $\mathcal{L}_{I,t}(\omega_{I,\text{th}}(t))=0$.
\end{proof}

When the conditions in Eq.\eqref{eq_condSF} are satisfied, this result provides complete knowledge of the instantaneous steady state, in the ubiquitous scenario of a thermal state for the environment.

\subsection{\label{sec: jumps}Graph-theoretic formulation of the jump operators}

In this section,  we will provide a characterization of the conditions in Eq.\eqref{eq_condSF}. In order to ease the notation, in this and the next section the dependence on $t$ will be omitted and the jumps will be simply denoted with $L_{\alpha}$.

Assuming the specific structure discussed in Eqs.\eqref{eq_jumpdiag} and \eqref{eq_jumpoffdiag}, we can generally split the set of jumps into diagonal and off-diagonal operators, as $\mathcal{J}=\mathcal{J}_{\text{diag}}\cup\mathcal{J}_{\text{off}}$, where
\begin{equation}
\label{jump_sets}
\begin{aligned}
\mathcal{J}_{\text{diag}} &\subseteq \Span \{ |n\rangle\langle n|\}_{n\in I}, \\
\mathcal{J}_{\text{off}}&\subseteq \Span  \{ |n\rangle\langle m|\}_{(n,m)\in E}, \\
\end{aligned}
\end{equation}
with $I\subseteq \{1,...,N\}$, $E=\{(n,m)\in I\times I| n\neq m\}$. In particular, as already explained in the text below Eq.\eqref{eq_jumpoffdiag}, the off-diagonal jumps are assumed to have the specific structure
\begin{equation}
\label{jumpoff_struct}
L_{\alpha}=\xi_{\alpha}|n_{\alpha}\rangle\langle m_{\alpha}|,
\end{equation}
for $\xi_{\alpha}\in \mathbb{C}$. Despite its simple form, this choice is frequently recurrent in several models of dissipation.

\begin{lemma}
\label{lemma_jumps}
Let $\mathcal{J}=\mathcal{J}_{\text{diag}} \cup \mathcal{J}_{\text{off}}$ be a set of jump operators as in Eq.\eqref{jump_sets} and let us assume that $\mathcal{J}$ is self-adjoint, then $\mathcal{J}^{'}=\Span\{1_H\}$ if and only if $\mathcal{J}_{\text{off}}^{'}=\Span\{1_H\}$.
\end{lemma}

\begin{proof}
If $\mathcal{J}_{\text{off}}^{'}=\Span\{1_H\}$ then clearly $\mathcal{J}$ has trivial commutant too, therefore let us focus on the other implication. Since the commutant is invariant under linear combinations, we can equivalently address the problem by studying $\{ |n\rangle\langle n|\}_{n\in I}^{'}$ and $ \{ |n\rangle\langle m|\}_{(n,m)\in E}^{'}$. 
Let us assume that  $\mathcal{J}^{'}$ is trivial, but $\mathcal{J}_{\text{off}}^{'}$ is not, hence there exists some $a\notin \Span\{1_H\} $ that commutes with all the off-diagonal elements $|n\rangle\langle m|$ for $n\neq m$. Since $\mathcal{J}_{\text{off}}$ is also self-adjoint by hypothesis,  then both $|n\rangle\langle m|$ and $|m\rangle\langle n|$ are in the set.
Let us express $a=\sum_{pq}a_{pq}|p\rangle\langle q|$, the condition $[|n\rangle\langle m|,a]=[|m\rangle\langle n|,a]=0$ implies
\begin{equation}
\begin{aligned}
\delta_{np}a_{mq}=& a_{np}\delta_{mq}, \\
\delta_{nq}a_{pm}=& a_{nq}\delta_{pm}, \\
\end{aligned}
\end{equation}
which gives $a_{nn}=a_{mm}$ and that the off-diagonal entries in the rows and columns $n,m$ are zero, namely
\begin{equation}
\label{eq_amatrix}
a=\left(
\begin{array}{ccccccc}
 &  &  0 & & 0 & & \\
  &  & \vdots & & & & \\
  0 & \dots & a_{nn} &   & \dots & & 0\\
      & &  \vdots & & \vdots & &\\
 0   & &  \dots &  & a_{mm} & \dots  & 0\\
          & &  & & \vdots & &\\
    & &  0 & & 0 & & \\
\end{array}
\right).
\end{equation}
It is not difficult to check that such $a$ also commutes with any linear combination of  $ |n\rangle\langle n|$ and $|m\rangle\langle m|$. Therefore, extending the argument to all jumps in $\mathcal{J}_{\text{off}}$, one has that 
$a\in \mathcal{J}^{'}_{\text{diag}}$, which is a contradiction.
\end{proof}

This lemma allows us to focus our analysis on the off-diagonal jump operators, which provide a simple characterization of the conditions in Eq.\eqref{eq_condSF}.
In order to do so, it is necessary first to introduce a graph-theoretic structure associated to the set of jumps.
Any reader who is not familiar with standard nomenclature and notions in graph theory can find in Appendix \ref{app: graph} a self-contained introduction to the topic and the results used in this work.
We emphasize that the graph-theoretic formulation introduced in this section serves as a tool to analyze the irreducibility condition for the specific class of microscopically derived generators considered here. The general results of Theorems \ref{th: main_1} and \ref{th: main_2} do not rely on this construction.

\begin{definition}
\label{def: jumpsgraph}
Let 
\begin{equation}
\Big\{ O_{\alpha}=\sum_{(n,m)\in E_{\alpha}} (O_{\alpha})_{nm}|n\rangle\langle m|\Big\}_{\alpha}
\end{equation}
be  a set of operators, where $E_{\alpha}\subseteq \{1,...,N\}\times \{1,...,N\} $ and $(O_{\alpha})_{nm}\neq 0$. The weighted graph $G=(V,E,W)$ associated to the set is given by
\begin{enumerate}
    \item $V=\{ |n\rangle \}_{n=1,...,N}$ set of vertices;
    \item $E=\bigcup\limits_{\alpha} E_{\alpha}$ set of edges;
    \item $W: (n,m)\in E\rightarrow \sum_{\alpha}| (O_{\alpha})_{nm}|^2\in \mathbb{R}^+$ weight function.
\end{enumerate}
\end{definition}
In the following, the set of operators under analysis will be given by the jump operators in Eq.\eqref{jump_sets}. In particular, we notice that the diagonal jumps correspond to loops in graph, whereas the off-diagonal subset is responsible for connections between different nodes.  Because of Lemma \ref{lemma_jumps},  we can simply neglect such loops and restrict our analysis to simple graphs. Moreover, if both pairs $(n,m)$ and $(m,n)$ are in $E$, i.e. $\mathcal{J}$ is self-adjoint, then  the adjacency relation of $G$ is symmetric, i.e. the graph is undirected.

We can now prove the following result
\begin{lemma}
\label{lemma_irrgraph}
Let  $\mathcal{J}_{\text{off}}$ be the set of off-diagonal jump operators in Eq.\eqref{jump_sets} and let $G$ be its associated graph. If $\mathcal{J}_{\text{off}}$ is self-adjoint and irreducible, then $G$ is undirected and connected.

\end{lemma}

\begin{proof}

The relation between self-adjointness of $\mathcal{J}_{\text{off}}$ and  the property that $G$ is undirected was already discussed. Now let us assume that $\mathcal{J}_{\text{off}}$ is irreducible but $G$ is not connected. Upon reordering of the set of vertices (i.e., of the basis $|n\rangle$), the graph can be expressed in terms of two disconnected graphs $G_1,G_2$ with  nodes $V_1,V_2$ such that $V_1\cap V_2=\emptyset, V_1\cup V_2=V$. This implies that the adjacency matrix $A_G$ takes the block-diagonal form
\begin{equation}
\label{eq_Agblock}
A_G=\left[
\begin{array}{cc}
(A_G)_1 & 0 \\
0 & (A_G)_2 \\
\end{array}
\right],
\end{equation}
where the entries in the off-diagonal blocks are $(A_G)_{nm}=0$, for $n\in V_1, m\in V_2$ and vice versa. But, since $(A_G)_{nm}=\sum_{\alpha}|(L_{\alpha})_{nm}|^2=0$ implies that  $(L_{\alpha})_{nm}=0$, $\forall\alpha$, thus each jump operator has also the block-diagonal form
\begin{equation}
L_{\alpha}=\left[
\begin{array}{cc}
(L_{\alpha})_1 & 0 \\
0 & (L_{\alpha})_2 \\
\end{array}
\right],
\end{equation}
which implies that any operator $a=a_1 1_1 \oplus a_2 1_2$, with $a_1,a_2\in \mathbb{C}$ commutes with all jumps and thus $\mathcal{J}_{\text{off}}$ is not irreducible.

\end{proof}

Lemmas \ref{lemma_jumps} and \ref{lemma_irrgraph} provide a simple criterion to address the conditions in Eq.\eqref{eq_condSF}, making use of a graph analysis. We observe that, a relation between connectivity of graphs associated to the GKLS generator and ergodic properties of the dynamics has been pointed out also in \cite{ZhangY}.

We conclude that the results exposed in this section can be generalized to the case where the structure highlighted in Eq.\eqref{jump_sets} holds in another
arbitrary orthonormal basis $\{|\varphi_n\rangle\}_n$ on $\mathscr{H}$.

\subsection{\label{sec: lambda}Spectral bounds}

In this paragraph, we will provide non-trivial bounds on the quantities introduced in Eq.\eqref{eq_lambdas}, under the hypotheses of Theorem \ref{th: main_1}. 

For $\lambda_{\text{max}}$, one can directly utilize the Böttcher-Wenzel inequality (\cite{Bottcher2005}) $||[a,b]||_2\le \sqrt{2}||a||_2||b||_2$, to obtain
\begin{equation}
\label{bound_lambdamax}
\lambda_{\text{max}}\le 2\sum_{\alpha}\gamma_{\alpha}||L_{\alpha}||_2^2.
\end{equation}
Clearly, this upper bound holds in general, regardless of the specific form of the jumps and the condition of irreducibility. A more complicated task is to provide a non-trivial lower bound for $\lambda_{\text{min}}$. 
To achieve this, we will leverage the construction introduced in Definition \ref{def: jumpsgraph}, making use of known results in spectral theory of graphs.

\begin{lemma}
\label{lemma_spectralbounds}
Let $\mathcal{J}=\mathcal{J}_{\text{diag}}\cup\mathcal{J}_{\text{off}}$  be the set of jump operators in Eq.\eqref{jump_sets}, which we assume to be self-adjoint and irreducible,  where $\mathcal{J}_{\text{off}}=\{ L_{\alpha}=\xi_{\alpha}|n_{\alpha}\rangle\langle  m_{\alpha}|\}_{\alpha\neq\alpha_0}$ as in Eq.\eqref{jumpoff_struct}
and let $\{\gamma_{\alpha} \}_{\alpha\neq \alpha_0}$ be the associated dissipation rates. Then
\begin{equation}
\label{eq_boundslambdas}
\lambda_{\text{min}} \ge \min\Big\{1, \frac{8}{N(N-1)}\Big\}  \min\limits_{\alpha\neq\alpha_0}\{\gamma_{\alpha}|\xi_{\alpha}|^2 \}.
\end{equation}

\end{lemma}

\begin{proof}

Let us consider $\lambda_{\text{min}}$ defined in Eq.\eqref{eq_lambdas}. According  to the decomposition in Eq.\eqref{jump_sets}, we can split $\sum_{\alpha}=\sum_{\alpha_0}+\sum_{\alpha\neq\alpha_0}$, where the first contributions given in Eq.\eqref{eq_jumpdiag} belong to $\mathcal{J}_{\text{diag}}$.
Therefore, we have
\begin{equation}
\begin{aligned}
\lambda_{\text{min}}&\ge \inf\limits_{a\perp 1_H}\sum_{\alpha\neq\alpha_0}\gamma_{\alpha}\frac{||[L_{\alpha},a]||_2^2}{||a||_2^2}  \\
& =\underbrace{\inf\limits_{a \perp 1_H}\frac{(a,\tilde{\mathcal{K}}(a))_{HS}}{||a||_2^2}}_{=\tilde{\lambda}_{\text{min}}},
\end{aligned}
\end{equation}
where $\tilde{\mathcal{K}}(a)=\sum_{\alpha\neq\alpha_0}\gamma_{\alpha}[L_{\alpha}^{\dagger},[L_{\alpha},a]]$ has the same structure of the superoperator introduced in Eq.\eqref{eq_Kdef}. In particular, because of Lemma \ref{lemma_jumps}, the set $\{L_{\alpha}\}_{\alpha\neq \alpha_0}=\mathcal{J}_{\text{off}}$ is still self-adjoint and irreducible, hence $\tilde{\mathcal{K}}>0$ when restricted to the subspace orthogonal to the identity. We set up to provide a non-trivial lower bound to its smallest non-zero eigenvalue $\tilde{\lambda}_{\text{min}}$.

In order to do so, we express the superoperator $\tilde{\mathcal{K}}$ in the basis $\{|n\rangle\langle m|\}_{n,m=1,...,N}$ in Eq.\eqref{jump_sets}, obtaining the $N^2\times N^2$ matrix with entries 
\begin{equation}
\label{eq_Kmatrixelem_0}
\begin{aligned}
\tilde{\mathcal{K}}_{nm,pq}=& (|n\rangle\langle m|, \tilde{\mathcal{K}}(|p\rangle\langle q|))_{HS} \\
=&\sum_{\alpha\neq \alpha_0}\gamma_{\alpha}\Big[(L_{\alpha}^{\dagger}L_{\alpha})_{np}\delta_{mq}+(L_{\alpha}L_{\alpha}^{\dagger})_{qm}\delta_{np} \\
&- (L_{\alpha})_{np} (L_{\alpha}^{\dagger})_{qm}-(L_{\alpha})_{qm} (L_{\alpha}^{\dagger})_{np} \Big].
\end{aligned}
\end{equation}
Utilizing the expression $L_{\alpha}=\xi_{\alpha}|n_{\alpha}\rangle\langle m_{\alpha}|$, we get to
\begin{equation}
\label{eq_explicitK}
\begin{aligned}
\tilde{\mathcal{K}}_{nm,pq}=&\sum_{\alpha\neq \alpha_0}\gamma_{\alpha}
|\xi_{\alpha}|^2\Big(\delta_{n,m_{\alpha}}\delta_{np}\delta_{mq}+\delta_{m,n_{\alpha}}\delta_{np}\delta_{mq} \\
&- \delta_{n,n_{\alpha}}\delta_{m,n_{\alpha}}\delta_{p,m_{\alpha}}\delta_{q,m_{\alpha}} \\
&- \delta_{n,m_{\alpha}}\delta_{m,m_{\alpha}}\delta_{p,n_{\alpha}}\delta_{q,n_{\alpha}}\Big). \\
\end{aligned}
\end{equation}
From this expression and the fact that $n_{\alpha}\neq m_{\alpha}$, one can easily show that $\tilde{\mathcal{K}}_{nm,pp}=0$, $\forall n,m$ s.t. $n\neq m$, $\forall p$, which means that the superoperator has no mixed contributions between diagonal and off-diagonal basis elements. This is equivalent to say that
\begin{equation}
\begin{aligned}
\tilde{\mathcal{K}}(\Span\{|n\rangle\langle n|\}_n )\subseteq & \Span\{|n\rangle\langle n|\}_n , \\
\tilde{\mathcal{K}}(\Span\{|n\rangle\langle m|\}_{n\neq m} )\subseteq & \Span\{|n\rangle\langle m|\}_{n\neq m}.
\end{aligned}
\end{equation}

Therefore, we can address separately the spectrum of the superoperator restricted to the subspace of diagonal and off-diagonal basis elements:

\begin{enumerate}
\item \textbf{Diagonal subspace}. 
Setting $n=m,p=q$, we get
\begin{equation}
\begin{aligned}
\tilde{\mathcal{K}}_{nn,pp}=&\sum_{\alpha\neq\alpha_0}\gamma_{\alpha}|\xi_{\alpha}|^2\Big[\delta_{np}(\delta_{n,n_{\alpha}}+\delta_{n,m_{\alpha}} ) \\
&-\delta_{n,n_{\alpha}}\delta_{p,m_{\alpha}}-\delta_{n,m_{\alpha}}\delta_{p,n_{\alpha}} \Big].
\end{aligned}
\end{equation}
This expression can be interpreted as the Laplacian matrix $L_G=D_G-A_G$ of the weighted graph $G=(V,E,W)$ associated to the set $\{\sqrt{\gamma}_{\alpha}(L_{\alpha}+L_{\alpha}^{\dagger})\}_{\alpha\neq \alpha_0}$, with
nodes given by $V=\{|n\rangle\}_{n=1,...N}$, degree matrix
\begin{equation}
(D_G)_{np}=\delta_{np}\Big(\sum_{\alpha\neq \alpha_0}\gamma_{\alpha}(L_{\alpha}^{\dagger}L_{\alpha} + L_{\alpha}L_{\alpha}^{\dagger}) \Big)_{nn}
\end{equation}
and adjacency matrix 
\begin{equation}
(A_G)_{np}=\sum_{\alpha\neq\alpha_0}\gamma_{\alpha}\Big(|(L_{\alpha})_{np}|^2+ |(L_{\alpha})_{pn}|^2\Big).
\end{equation}
One can easily check that the important property
\begin{equation}
(D_G)_{nn}=\sum_{p}(A_G)_{np}
\end{equation}
is satisfied.
We observe that 
\begin{equation}
L_G\ge 2 \min\limits_{\alpha\neq \alpha_0}\{\gamma_{\alpha}|\xi_{\alpha}|^2 \} L_H,
\end{equation}
where $L_H$ is the Laplacian of the undirected simple graph $H=(V,E)$ associated to $\{\frac{L_{\alpha}}{|\xi_{\alpha}|} \}_{\alpha\neq \alpha_0}$. In particular, $\{\frac{L_{\alpha}}{|\xi_{\alpha}|} \}_{\alpha\neq \alpha_0}$ is clearly self-adjoint and irreducible, thus Lemma \ref{lemma_irrgraph} guarantees that $H$ is a connected graph. From known results in graph theory (see Appendix \ref{app: graph}), $L_H$ has a unique zero eigenvalue (corresponding to the identity operator for $\tilde{\mathcal{K}}$),  whereas the smallest non-zero eigenvalue can be bounded by the Mohar-McKay inequality in Eq.\eqref{eq_boundl2uni}. Putting all together, we obtain that the smallest non-zero eigenvalue of $L_G$ is bounded by
\begin{equation}
\label{eq_lambda2Kdiag}
\tilde{\mu}_2\ge \frac{8}{N(N-1)} \min\limits_{\alpha\neq \alpha_0}\{\gamma_{\alpha}|\xi_{\alpha}|^2 \},
\end{equation}
which provides a lower bound to the spectrum of $\tilde{\mathcal{K}}$ orthogonal to the identity, on the diagonal subspace.

\item \textbf{Off-diagonal subspace}. We will study now the remaining part of the spectrum of $\tilde{\mathcal{K}}$, which we know to be strictly positive. Considering Eq.\eqref{eq_explicitK} for  $p\neq q$, $n\neq m$, we obtain that 
\begin{equation}
 \hspace*{\leftmargini}
\begin{aligned}
\tilde{\mathcal{K}}_{nm,pq}
=\sum_{\alpha\neq \alpha_0}\gamma_{\alpha}
|\xi_{\alpha}|^2(\delta_{n,m_{\alpha}}+\delta_{m,n_{\alpha}} )\delta_{np}\delta_{mq}.
\end{aligned}
\end{equation}
Since the matrix is diagonal on such subspace, it means that all $N^2-N$ operators $\{|p\rangle\langle q| \}_{p\neq q}$ are actually eigenoperators with eigenvalue
\begin{equation}
\tilde{\mu}_{pq}=\sum_{\alpha\neq \alpha_0}\gamma_{\alpha}
|\xi_{\alpha}|^2(\delta_{p,m_{\alpha}}+\delta_{q,n_{\alpha}} )>0.
\end{equation}
In particular, we have that $\tilde{\mu}_{pq}\ge \min\limits_{\alpha\neq \alpha_0}\gamma_{\alpha}
|\xi_{\alpha}|^2$, $\forall p\neq q$. The latter expression and Eq.\eqref{eq_lambda2Kdiag} lead to the bound in Eq.\eqref{eq_boundslambdas}.

\end{enumerate}

\end{proof}

\subsection{\label{sec: example}Example: 3-level system}

Ultimately, as a concrete example we study a driven 3-level system. The qutrit is characterized by the energy levels $E_3>E_2>E_1$, and a coherent driving field between the ground state $|1\rangle$ and the excited state $|3\rangle$, namely
\begin{equation}
\label{eq_Hqutrit}
H_S(t)=\left(\begin{array}{ccc}
E_1 & 0 & h(t) \\
0 & E_2 & 0 \\
h(t) & 0 & E_3 \\
\end{array}
\right),
\end{equation}
where for now $h(t)$ is some analytic function such that $h(0)=0$. In the following,
it is convenient to remove from $H_S(t)$ the contribution $\frac{E_1+E_3}{2}$ proportional to the identity and work directly with the parameters  $\delta=\frac{E_3-E_1}{2}, \eta=E_2-\frac{E_1+E_3}{2}$, instead of $E_3,E_2,E_1$. We will assume for now that $\eta\neq 0$, leaving the discussion of the case $\eta=0$ to the end. The interaction with the environment in Eq.\eqref{eq_bosonicbath} is responsible for incoherent transitions between the levels $|3\rangle\rightarrow |1\rangle$ and $|3\rangle\rightarrow |2\rangle$, as given by
\begin{equation}
\label{eq_Aqutrit}
\begin{aligned}
A_1 &=|1\rangle\langle 3|+ |3\rangle\langle 1|, \\
A_2 &=|2\rangle\langle 3|+ |3\rangle\langle 2|. \\
\end{aligned}
\end{equation}
As derived in Appendix \ref{sec: app_mastereq}, the dynamics in the interaction picture is governed by the GKLS generator
\begin{equation}
\label{eq: gkls_qutrit}
\begin{aligned}
\mathcal{L}_{I,t}(\rho)=&-\imath[{H}_{I,LS},\rho]+\gamma_{0}\Big(L_{0}\rho L_{0}-\frac{1}{2}\{ L_{0}^2,\rho\} \Big)  \\
&\sum_{\nu=1,2,3} \Big[\gamma_{-,\nu}\Big(L_{\nu}\rho L_{\nu}^{\dagger}-\frac{1}{2}\{ L_{\nu}^{\dagger}L_{\nu},\rho\} \Big)  \\
&+\gamma_{+,\nu} \Big(L_{\nu}^{\dagger}\rho L_{\nu} -\frac{1}{2}\{ L_{\nu}L_{\nu}^{\dagger},\rho\} \Big) \Big],
\end{aligned}
\end{equation}
where 
\begin{equation}
\label{eq_jumps}
\begin{aligned}
L_0(t)=& \sin\theta(t) (|3\rangle\langle 3|-|1\rangle\langle 1|), \\
L_1(t)=& \cos\theta(t)|1\rangle\langle 3|, \\
L_2(t)=& \sin\frac{\theta(t)}{2}|1\rangle\langle 2|, \\
L_3(t)=& \cos\frac{\theta(t)}{2}|2\rangle\langle 3|, \\
\end{aligned}
\end{equation}
with $\tan\theta(t)=\frac{2h(t)}{E_3-E_1}\equiv \frac{h(t)}{\delta}$,
whereas the expressions for the rates are given in Eq.\eqref{eq_ratesqutrit}. We observe that these jumps are exactly in the form outlined in Eq.\eqref{jump_sets}.

As discussed in Section \ref{sec: omega}, the Hamiltonian term ${H}_{I,LS}(t)$ does not play any role in determining the quantities involved in Theorem \ref{th: main_2}, hence we can simply omit it in our analysis.

We are now ready to study the conditions for relaxation.

\begin{enumerate}
\item \textbf{Set of jump operators}. It is not difficult to show that the self-adjoint set of jump operators 
\begin{equation}
\mathcal{J}_{\text{off}}=\{ L_{\nu},L_{\nu}^{\dagger}\}_{\nu=1,2,3}
\end{equation}
is also irreducible. In particular, utilizing Definition \ref{def: jumpsgraph}, one can easily apply Lemma \ref{lemma_irrgraph} to the graph

\begin{center}
\begin{tikzpicture}[node distance=2cm]
    \node[circle, draw] (1) {$|1\rangle$};
    \node[circle, draw, right of=1] (2) {$|2\rangle$};
    \node[circle, draw, below of=1] (3) {$|3\rangle$};

    \draw (1) -- node[above] {$\omega_{12}$} (2);
    \draw (1) -- node[left] {$\omega_{13}$} (3);
    \draw (2) -- node[right] {$\omega_{23}$} (3);
\end{tikzpicture}
\end{center}

associated to such set, where  the weights $W(\{n,m\})=\omega_{nm}$ are given by
\begin{equation}
\label{eq_weightqutrit}
\begin{aligned}
\omega_{12}=&\omega_{21}=\sin^2\Big(\frac{\theta(t)}{2}\Big)=\frac{1}{2}\Big[1-\frac{\delta}{r(t)}\Big], \\
\omega_{23}=&\omega_{32}=\cos^2\Big(\frac{\theta(t)}{2}\Big)=\frac{1}{2}\Big[1+\frac{\delta}{r(t)}\Big], \\
\omega_{13}=&\omega_{31}=\cos^2\theta(t)=\frac{\delta^2}{r^2(t)}, \\
\end{aligned}
\end{equation}
where $r(t)=\sqrt{\delta^2+h^2(t)}$. We observe that $0<\delta\le r(t)$.

\item \textbf{Spectral bounds in Eq.\eqref{eq_lambdas}}.

The quantities in Eq.\eqref{eq_lambdas} can be bounded utilizing Eqs.\eqref{bound_lambdamax}, \eqref{eq_boundslambdas} and the expressions for the rates in Eq.\eqref{eq_ratesqutrit}. In this section, we will focus on $\lambda_{\text{min}}(t)$, which is crucial in determining the relaxation rate function $F(t)$. We have that
\begin{equation}
 \hspace*{\leftmargini}
\label{eq_lambdaqutrit}
\begin{aligned}
\lambda_{\text{min}}(t)\ge  2\pi \min\limits_{\omega\in \{2r(t),r(t)-\eta\} }\{J(\omega) \}\bar{n}(2 r(t) )\frac{\delta^2}{r^2(t)},
\end{aligned}
\end{equation}
where $J(\omega)$ is the bath spectral density, $\bar{n}(\omega)=(e^{\beta\omega}-1)^{-1}$.

The derivation of the latter equation can be found in Appendix \ref{sec: app_mastereq}.

\item \textbf{Instantaneous steady state}.
This is fully determined by the fact that the dissipation rates in Eq.\eqref{eq_ratesqutrit} obey the detailed balance condition in Eq.\eqref{eq_detailedbalance}, hence Lemma \ref{lemma_fixedth} 
guarantees that the unique instantaneous steady state is the thermal state $\omega_{\text{th}}(t)$ at inverse temperature $\beta$, with respect to the Hamiltonian in Eq.\eqref{eq_avgH}. In particular, its spectrum reads 
\begin{equation}
\label{eq_qutritomega}
\begin{aligned}
\omega_{-}(t)=&\frac{e^{\beta r(t)}}{e^{-\beta \eta}+ 2\sinh(\beta r(t))}, \\ 
\omega_{2}(t)=&\frac{e^{-\beta \eta}}{e^{-\beta \eta}+ 2\sinh(\beta r(t))}, \\ 
\omega_{+}(t)=&\frac{e^{-\beta r(t)}}{e^{-\beta \eta}+ 2\sinh(\beta r(t))}. \\ 
\end{aligned}
\end{equation}
Moreover, one can also derive the spectrum of $\dot{\omega}_{\text{th}}(t)$, which we conveniently normalize with $m(\omega_{\text{min}}(t))=\omega_+(t)$, namely
\begin{equation}
 \hspace*{\leftmargini}
\label{eq_qutritomegadot}
\begin{aligned}
\frac{\dot{\omega}_+(t)}{\omega_+(t)}&=-\beta\dot{r}(t)\Bigg[1+\frac{2\sinh(\beta r(t))}{e^{-\beta \eta}+ 2\sinh(\beta r(t))}\Bigg], \\
\frac{\dot{\omega}_-(t)}{\omega_+(t)}&=\beta\dot{r}(t)e^{2\beta r(t)}\Bigg[1-\frac{2\sinh(\beta r(t))}{e^{-\beta \eta}+ 2\sinh(\beta r(t))}\Bigg], \\
\frac{\dot{\omega}_2(t)}{\omega_+(t)}&=-\beta\dot{r}(t)e^{\beta(r(t)-\eta)}\frac{2\sinh(\beta r(t))}{e^{-\beta \eta}+ 2\sinh(\beta r(t))}. \\
\end{aligned}
\end{equation}

The smallest eigenvalue $m(\dot{\omega}_{th}(t))$ depends on the behavior of the function $\dot{r}(t)$ and reads
\begin{equation}
 \hspace*{\leftmargini}
\label{eq_qutritbounddot}
m(\dot{\omega}_{th}(t))=
\begin{cases}
\min\{\dot{\omega}_+(t),\dot{\omega}_2(t)\} & \text{if } \dot{r}(t)>0, \\
0  & \text{if } \dot{r}(t)=0, \\
\dot{\omega}_-(t) & \text{if } \dot{r}(t)<0. \\
\end{cases}
\end{equation}

Details on the derivation can be found in Appendix \ref{sec: app_mastereq}.

\end{enumerate}

Finally, we want to comment the case $\eta=0$. From the expressions in Eq.\eqref{eq_A3level}, one can easily see that the set of Bohr frequencies become degenerate and the jumps are now given by $\mathcal{J}^{(\eta=0)}=\{L_{\mu}^{(\eta=0)}(t), (L_{\mu}^{(\eta=0)})^{\dagger}(t)\}_{\mu=0,1,2}$, where
\begin{equation}
\begin{aligned}
L_{0}^{(\eta=0)}(t)&=L_0(t), \\
L_{1}^{(\eta=0)}(t)&=L_1(t), \\
L_{2}^{(\eta=0)}(t)&=L_2(t)-L_3(t).
\end{aligned}
\end{equation}
The results concerning the irreducibility of $\mathcal{J}^{(\eta=0)}$ and the instantaneous steady state are easily extended to this case. A specific mention, however, concerns the bound for $\lambda_{\text{min}}$, which cannot be derived anymore by applying straightforwardly Lemma \ref{lemma_spectralbounds}, since  $L_{2}^{(\eta=0)}(t)$ does not have the required structure in Eq.\eqref{jumpoff_struct}. Nevertheless, it is still possible to compute such bound via a direct diagonalization of the superoperator $\tilde{\mathcal{K}}$ in Eq.\eqref{eq_Kmatrixelem_0}. In this case, one can easily check that the diagonal and off-diagonal subspaces in the basis $\{|n\rangle\langle m|\}_{n,m=1,2,3}$ are still decoupled and that a simple lower bound is provided by
$\lambda_{\text{min}}^{(\eta=0)}(t)\ge \frac{1}{2} \min\limits_{\mu=1,2}\{ \gamma^{(\eta=0)}_{+,\mu}(t)\}$, where $\gamma^{(\eta=0)}_{+,\mu}(t)$ denotes the dissipation rates in Eq.\eqref{eq_ratesqutrit} in the limit $\eta=0$, namely
\begin{equation}
\lambda_{\text{min}}^{(\eta=0)}(t)\ge\pi \min\limits_{\omega\in \{2r(t),r(t)\} }\{J(\omega) \}\bar{n}(2 r(t) ).
\end{equation}

Putting all together, $F(t)$ in Eq.\eqref{eq_boundsF} can be bounded by
\begin{equation}
F(t)\ge \frac{1}{2}\Big[ e^{-2\beta r(t)}\lambda_{\text{min}}(t)+\frac{m(\dot{\omega}_{th}(t))}{m(\omega_{\text{min}}(t))}\Big],
\end{equation}
hence, if $\int\limits_{t_0}^{\infty}dt F(t)=\infty$, for some $t_0\ge 0$, the condition of relaxation in Eq.\eqref{eq_condrelax} is satisfied.

\section{\label{sec:concl}Conclusions}

In this paper, we have addressed the long-time dynamics of Markovian master equations. 
A general form of relaxation (weak relaxation) has been introduced. In contrast to the standard definition in the context of quantum dynamical semigroups (strong relaxation), our definition does not require the existence of a well-defined (stationary) state for $t\rightarrow \infty$, but rather a trajectory of states.

We have derived sufficient conditions that allow us to determine whether the dynamics described by time-dependent Markovian master equations is weakly relaxing. These conditions are a natural generalization of the ones provided by the Spohn-Frigerio Theorem \ref{th_spohnfrig} and also include an estimate of a relaxation rate function $F(t)$. Broadly speaking, $F(t)$ is expressed in terms of the spectrum of the instantaneous steady state, the dissipation rates and the norm of the jump operators. 
Focusing on a specific type of Markovian master equation for driven systems recently derived, we linked the latter quantities to the structure of the microscopic Hamiltonian for system and environment, introducing a graph theoretical structure associated to the set of jump operators. This allowed us to link the irreducibility of the set of jumps with the connectivity of the graph, as well as providing a bound for the quantities contributing to the relaxation rate function $F(t)$.
In this context, the specific instance of a driven 3-level system is analyzed.

Finally, as expressed in Corollary \ref{cor_nonMarkov}, our results are applicable even in the case of generic non-Markovian time-local master equations, which become asymptotically Markovian. This simple consequence opens up the possibility of formulating a more general theory of relaxation that extends beyond the Markovian regime.

In conclusion, we outline some possible extensions of the present work.

First, it is desirable to analyze the necessary conditions for relaxation, which eventually would provide a deeper understanding of the emergence and role of multiple asymptotic trajectories, dependent on the initial state preparation.

Second, the methods employed in our paper might be effectively extended to encompass more general cases of non-Markovian dynamics, which currently lie beyond the regime of applicability of our results.

\begin{acknowledgments}

Giovanni Di Meglio acknowledges fruitful discussions with Felix Ahnefeld. This work was supported by the ERC Synergy grant HyperQ (Grant no. 856432) and the BMBF under the funding program 'quantum technologies - from basic research to market' in the project CoGeQ (Grant no. 13N16101). D.C. was supported by the Polish National Science Centre project No. 2024/55/B/ST2/01781.

\end{acknowledgments}

\appendix

\section{\label{sec:app_Heisenberg}Heisenberg picture analysis}

In this appendix, we provide an alternative proof of the conditions for relaxation in the time-independent case, by approaching the problem in the Heisenberg picture. Given a GKLS generator $\mathcal{L}$, we recall that $\mathcal{L}^{\ddagger}$ denotes its adjoint with respect to the Hilbert-Schmidt scalar product.

Let $\omega$ be a positive operator, we can consider the following weighted scalar product
\begin{equation}
(a,b)_{\omega}=\Tr(\omega a^{\dagger}b),
\end{equation}
which defines the weighted norm $||a||_{\omega}$. Utilizing the spectrum of $\omega$, we can derive the useful inequality
\begin{equation}
\label{eq_normeq}
c_{\omega}||a||_2\le ||a||_{\omega}\le C_{\omega}||a||_2,
\end{equation}
where clearly $c_{\omega}=\sqrt{\inf \Sp(\omega)}$, $C_{\omega}=\sqrt{\sup \Sp(\omega)}$.

The connection between the convergence in the Heisenberg picture and the relaxation of the CPTP semigroup is provided by the following 
\begin{lemma}
\label{lemma_conv}
Let $\mathcal{L}$ be the generator of a CPTP semigroup with a faithful steady state $\omega$ and $a(t)=e^{t\mathcal{L}^{\ddagger}}(a)$, then if $\forall a\in\mathscr{B}$,
\begin{equation}
\lim\limits_{t\rightarrow\infty}a(t)=\Tr(\omega a)1_H,
\end{equation}
the semigroup is relaxing, i.e. $\lim\limits_{t\rightarrow\infty}e^{t\mathcal{L}}(\rho_0)=\omega$, $\forall\rho_0 \in\mathscr{S}_1^+$.
\end{lemma}
\begin{proof}
By hypothesis we have that
\begin{equation}
\Tr(a(t)\rho_0)\rightarrow \Tr(\omega a)\Tr(\rho_0)=\Tr(\omega a),
\end{equation}
$\forall a \in\mathscr{B}$, $\forall\rho \in\mathscr{S}_1^+$. But, this implies $\Tr(a(t)\rho_0)=\Tr(a \rho(t))$ and since this holds $\forall a$, the thesis follows.
\end{proof}

We are now ready to provide an alternative proof to the Spohn-Frigerio theorem (\cite{Frigerio_1, Frigerio_2}) stated in Theorem \ref{th_spohnfrig}

\begin{proof}

First, we want to show that any GKLS generator $\mathcal{L}$ that satisfies the conditions in Eq.\eqref{eq_condSF} has a faithful steady state.

Let $\mathcal{L}(\rho)=-\imath[H,\rho]+\sum_{\alpha}\gamma_{\alpha} (L_{\alpha}\rho L_{\alpha}^{\dagger}-\frac{1}{2}\{ L_{\alpha}^{\dagger} L_{\alpha},\rho\})$ be a GKLS generator as in the hypothesis, so we assume that for any $L_{\alpha}$, the adjoint $L_{\alpha}^{\dagger}$ appears in the generator.
Let $\omega$ be a steady state of the dynamics, which we know always exists, and let us assume that $\omega\ge 0$. We consider $|\psi\rangle\in\ker(\omega)$, then from $\mathcal{L}(\omega)=0$ we obtain 
\begin{equation}
\langle\psi|\mathcal{L}(\omega)|\psi\rangle=\sum_{\alpha}\gamma_{\alpha} \langle\psi|L_{\alpha}\omega L_{\alpha}^{\dagger}|\psi\rangle=0,
\end{equation}
which implies that $\sqrt{\omega}(L_{\alpha}^{\dagger}|\psi\rangle)=0$, $\forall\alpha$. Therefore, $L_{\alpha}^{\dagger}|\psi\rangle\in\ker(\omega)$, $\forall\alpha$, and since the set of jump operators includes the adjoint for each jump, we have that $L_{\alpha}|\psi\rangle\in\ker(\omega)$ too. Clearly, by the same argument it is not difficult to see that 
\begin{equation}
\begin{aligned}
\ker(\omega)= \Span\{ &|\psi\rangle, L_{\alpha}|\psi\rangle, L_{\alpha}^{\dagger}|\psi\rangle, L_{\beta} L_{\alpha}|\psi\rangle,  \\
&L_{\beta} L_{\alpha}^{\dagger}|\psi\rangle, ...\}_{\alpha,\beta,...},
\end{aligned}
\end{equation}
so that the image of the algebra generated by the identity and the jump operators is the kernel, namely $\alg\{1_H,L_{\alpha},L_{\alpha}^{\dagger}\}_{\alpha}(|\psi\rangle)=\ker(\omega)$. However, the von Neumann bicommutant theorem guarantees that the unital algebra $\alg\{1_H,L_{\alpha},L_{\alpha}^{\dagger}\}_{\alpha}$ is equal to its bicommutant, which is clearly the whole space of operators $\mathscr{B}$, due to the hypothesis of irreducibility. But this means that $\ker(\omega)=\mathscr{H}$ and thus $\omega=0$, which is a contradiction.

Now, let $\omega>0$ be such steady state, we consider on the space $\mathscr{B}$ the weighted scalar product induced by $\omega$. Hence, we can  decompose the space into two orthogonal subspaces with respect to $(a,b)_{\omega}$, as $\mathscr{B}=\mathscr{B}_0\oplus \mathscr{B}_1$, where
\begin{equation}
\mathscr{B}_0=\{ b| (1_H,b)_{\omega}=0\}
\end{equation}
contains elements orthogonal to the identity, i.e. satisfying $\Tr(b\omega)=0$. Clearly, since $\mathscr{B}_1$ is spanned by the identity operator, the space $\mathscr{B}_0$ is $N^2-1$-dimensional.

Due to Lemma \ref{lemma_conv}, the relaxation is guaranteed if
we prove that 
\begin{equation}
b(t)=a(t)-\Tr(\omega a)1_H\rightarrow 0,
\end{equation}
$\forall a\in\mathscr{B}$, where $a(t)=e^{t\mathcal{L}^{\ddagger}}(a)$.
By construction, $b(t)=e^{t\mathcal{L}^{\ddagger}}(a-\Tr(\omega a)1_H)$, with $b(0)\in \mathscr{B}_0$. Moreover, since $\mathscr{B}_0$ is an invariant subspace under $\Lambda_t^{\ddagger}$, we also have that $b(t)\in\mathscr{B}_0$, $\forall t\ge 0$.

Because of the identity
\begin{equation}
a^{\dagger}\mathcal{L}^{\ddagger}(a)+\mathcal{L}^{\ddagger}(a^{\dagger})a-\mathcal{L}^{\ddagger}(a^{\dagger}a)=-\sum_{\alpha}\gamma_{\alpha}[a^{\dagger},L_{\alpha}^{\dagger}][L_{\alpha},a]\le 0,
\end{equation}
one can derive the expression
\begin{equation}
\begin{aligned}
\label{eq_vara}
\ddt{||b(t)||_{\omega}^2}&=-\sum_{\alpha}\gamma_{\alpha}\Tr(\omega[b^{\dagger}(t),L_{\alpha}^{\dagger}][L_{\alpha},b(t)])\\
&=-\sum_{\alpha}\gamma_{\alpha}||[L_{\alpha},b(t)]||_{\omega}^2 ,
\end{aligned}
\end{equation}
hence making use of the norm equivalence in Eq.\eqref{eq_normeq}, we have that
\begin{equation}
\begin{aligned}
\ddt{||b(t)||_{\omega}^2} & \le -c_{\omega}^2\sum_{\alpha}\gamma_{\alpha}||[L_{\alpha},b(t)]||_{2}^2 \\
& =-c_{\omega}^2(b(t),\mathcal{K}(b(t)))_{HS},
\end{aligned}
\end{equation}
where $\mathcal{K}(a)=\sum_{\alpha}\gamma_{\alpha}[L_{\alpha}^{\dagger},[L_{\alpha},a]]$ is the superoperator introduced in Eq.\eqref{eq_firstK}. Therefore, by the same arguments following Eq.\eqref{eq_firstK}, $\mathcal{K}>0$ on the subspace $\mathscr{B}_0$, thus
\begin{equation}
\begin{aligned}
\ddt{||b(t)||_{\omega}^2}&\le -c_{\omega}^2\inf \Sp(\mathcal{K})||b(t)||_2^2 \\
&\le -\frac{c_{\omega}^2}{C_{\omega}^2}\inf \Sp(\mathcal{K})||b(t)||_{\omega}^2,
\end{aligned}
\end{equation}
and, invoking again Grönwall's lemma, we obtain the inequality
\begin{equation}
\label{eq_relax}
||b(t)||_{\omega}^2\le e^{-t\frac{c_{\omega}^2}{C_{\omega}^2}\inf \Sp(\mathcal{K}) } ||b(0)||_{\omega}^2\rightarrow 0,
\end{equation}
$ \forall b(0)\in \mathscr{B}_0 $, which proves the theorem.
\end{proof}

We underline that the upper bound obtained in Eq.\eqref{eq_relax} is the same as the time-independent case of Eq.\eqref{eq_boundsigma}.
In fact, we can go a bit further and prove that the description of the Heisenberg picture dynamics via $\Lambda_t^{\ddagger}$ in terms of the weighted norm is complementary to the description of the reduced map $\Lambda_{t}^{\natural}$ introduced in Eq.\eqref{eq_resc}, with respect to the Hilbert-Schmidt norm. In particular, we are going to prove a duality relation, expressed by the following result

\begin{lemma}
Let $\Lambda_t$ be a CPTP semigroup with a faithful steady state $\omega>0$, let $\Lambda_t^{\ddagger}$ be the adjoint and $\Lambda_t^{\natural}$ the reduced map in Eq.\eqref{eq_resc}, then $||\Lambda_t^{\natural}||_{2,2}=||\Lambda_t^{\ddagger}||_{\omega,\omega}$.
\end{lemma}
\begin{proof}
We start from the definition
\begin{equation}
||\Lambda_t^{\natural}||_{2,2}=\sup\limits_{a,b}\frac{|(a, \Lambda_t^{\natural}(b))_{HS}|}{||a||_2 ||b||_2},
\end{equation}
where
\begin{equation}
\begin{aligned}
|(a,\Lambda_t^{\natural}(b))_{HS}|& =|(\omega^{-1/2}a,\Lambda_t(\omega^{1/2}b) )_{HS}| \\
& =|(\Lambda^{\ddagger}_t(\omega^{-1/2}a),\omega^{1/2}b )_{HS}|,
\end{aligned}
\end{equation}
and, setting $\sigma=\omega^{-1/2}a$, we get
\begin{equation}
\begin{aligned}
||\Lambda_t^{\natural}||_{2,2}& =\sup\limits_{\sigma,b}\frac{|(\omega^{1/2}\Lambda^{\ddagger}_t(\sigma),b )_{HS}|}{||\omega^{1/2}\sigma||_2 ||b||_2} \\
&\le  \sup\limits_{\sigma,b} \frac{||\Lambda^{\ddagger}_t(\sigma)||_{\omega}||b||_2}{||\sigma||_{\omega}||b||_2}=||\Lambda^{\ddagger}_t||_{\omega,\omega}.
\end{aligned}
\end{equation}
Similarly, starting from $||\Lambda^{\ddagger}_t||_{\omega,\omega}$ the converse inequality can be proved.
\end{proof}

Unfortunately, the analysis of the time-dependent case in the Heisenberg picture becomes more problematic, since it would require the introduction of a time-dependent norm in terms of the instantaneous steady state $\omega(t)$.

\section{\label{app: graph} Graph theory}

In this appendix, we provide a self-contained introduction to the main notions and results of graph theory employed in this paper. A more detailed exposition can be found in any reference on spectral graph theory (see for instance \cite{Brouwer2011}).

\begin{definition}
\label{def: graph_1}
A directed simple graph is an ordered pair $G=(V,E)$, where
\begin{enumerate}
    \item $V\neq \emptyset$ is a finite set of vertices (or nodes);
    \item $E\subseteq \{ (v,w)| v,w\in V, v\neq w \}$ is the set of direct edges (or arcs).
\end{enumerate}
\end{definition}

\begin{definition}
\label{def: graph_2}
An undirected simple graph is an ordered pair $G=(V,E)$, where
\begin{enumerate}
    \item $V\neq \emptyset$ is a finite set of vertices (or nodes);
    \item $E\subseteq \{ \{v,w\}| v,w\in V, v\neq w \}$ is the set of edges (or links).
\end{enumerate}
\end{definition}

The cardinalities $|V|$ and $|E|$ define, respectively, the order and the size of $G$.

We notice that the edges in a direct graph are defined as the ordered cartesian product $(v,w)\in V\times V$, whereas for the undirected graph we have the unordered pair $\{v,w\}$. 

Moreover, by definition a simple graph is characterized by the absence of multiple edges connecting the same pair of vertices, as well as loops. Multigraphs allow to encompass these additional features. 

In addition, one can also associate different numbers to each edge, introducing the notion of weighted graph:
\begin{definition}
A weighted undirected simple graph is an ordered pair $G=(V,E,W)$, where
\begin{enumerate}
\item $V\neq \emptyset$ is a finite set of vertices (or nodes);
\item $E\subseteq \{ \{v,w\}| v,w\in V, v\neq w \}$ is the set of edges (or links);
\item $W: E\rightarrow \mathbb{R}$ is a function that assigns a real number to each edge. 
\end{enumerate}
\end{definition}

As clarifying examples, the representation of a directed simple graph with
\begin{enumerate}
    \item $V=\{a,b,c,d\}$,
    \item $E=\{ (a,b),(a,c),(d,b),(d,c) \}$,
\end{enumerate}

is

\begin{center}
\begin{tikzpicture}[nodes={circle, draw}]
  \graph {
    a -> {b, c} <- d
  };
\end{tikzpicture}
\end{center}

An undirected simple graph with 
\begin{enumerate}
    \item $V=\{1,2,3,4\}$,
    \item $E=\{ \{1,3\}, \{1,2\} , \{1,4\} ,\{2,4\}  \}$,
\end{enumerate}

is instead represented by

\begin{center}
\begin{tikzpicture}[nodes={circle, draw}]
\graph{3 -- 1 -- 4,
       1--2--4 };
\end{tikzpicture}
\end{center}

An instance of directed multigraph is 

\begin{center}
\begin{tikzpicture}[nodes={circle, draw}]
  \graph  {
    A ->[bend left] B,
    A ->[bend right] B,
    B ->[bend left] A,

    B ->[bend left] C,

    A ->[bend left] C,
    C ->[bend left] A,

    B ->[loop right] B,
  };
\end{tikzpicture}
\end{center}

Finally, a weighted graph with
\begin{enumerate}
    \item $V=\{v_1,v_2,v_3\}$,
    \item $E=\{ \{v_1,v_2\}, \{v_1,v_3\} , \{v_2,v_3\}   \}$,
    \item $W(\{v_n,v_m\})=\omega_{nm}=\omega_{mn}$,
\end{enumerate}
is represented as 

\begin{center}
\begin{tikzpicture}[node distance=2cm]
    \node[circle, draw] (1) {$v_1$};
    \node[circle, draw, right of=1] (2) {$v_2$};
    \node[circle, draw, below of=1] (3) {$v_3$};

    \draw (1) -- node[above] {$\omega_{12}$} (2);
    \draw (1) -- node[left] {$\omega_{13}$} (3);
    \draw (2) -- node[right] {$\omega_{23}$} (3);
\end{tikzpicture}
\end{center}

In our analysis, we will focus uniquely on undirected simple graphs, possibly with positive weights. Moreover, for the sake of notation, the set of vertices $V=\{v_1,v_2,...,v_N \}$ will be equally identified with the set $\{1,...,N\}$ and similarly for the edges $E= \{ \{n,m\}| v_n,v_m\in V, v_n\neq v_m \}$,

Now we can introduce some useful matrices associated to a graph.

\begin{definition}
Let $G=(V,E,W)$ be a weighted undirected simple graph, then the degree $\Deg(v)$ of a vertex $v\in V$ is the sum of all weights of the edges incident to it, namely
\begin{equation}
\Deg(v)=\sum_{w\text{ s.t. } \{v,w\}\in E} W(\{v,w\}).
\end{equation}
In the case of unweighted graph, this is exactly the number of edges incident to it. The $|V|\times |V|$ diagonal matrix $D_G$ with entries $(D_G)_{nm}=\delta_{nm}\Deg(v_n)$ is called degree matrix of $G$.
\end{definition}

In an unweighted simple graph, the absence of multiple connections between the same two vertices implies that $\Deg(v)\le |V|-1$.

\begin{definition}
Let $G=(V,E,W)$ be a weighted undirected simple graph with $|V|=N$, the $N\times N$ matrix $A_G$ with entries
\begin{equation}
(A_G)_{nm}=\begin{cases}
W(\{v_n,v_m\}) & \text{if  } \{v_n,v_m\}\in E \\
0 & \text{otherwise}
\end{cases}
\end{equation}
is called adjacency matrix of $G$.
\end{definition}
By construction, for an undirected graph $A_G$ is symmetric. 
We underline that, both matrices $A_G$, $D_G$ are not uniquely defined, as we can always reorder the elements in $V$ without changing the graph. Clearly, the properties of a graph do not depend on such freedom of choice.

A crucial quantity in graph theory is the following
\begin{definition}
Let $G=(V,E,W)$ be a weighted undirected simple graph, the matrix $L_G=D_G-A_G$ is called Laplacian matrix of $G$.
\end{definition}

The following properties generally hold:
\begin{enumerate}
\item $L_G$ is symmetric.
\item If $W(e)\ge 0$, $\forall e\in E$ (positive weights) then $L_G$ is positive semi-definite.
To show this, we first notice that $L_G$ can be decomposed as $L_G=\sum_{ \{n,m\}\in E} M_{nm}$, where 
\begin{equation}
\label{eq_laplaciansum}
(M_{nm})_{pq}=
\begin{cases}
W_{nm} & \text{if  $(p,q)=(n,n)$ or $(p,q)=(m,m)$  }\\
-W_{nm} & \text{if  $(p,q)=(n,m)$ or $(p,q)=(m,n)$  }\\
0 & \text{otherwise}
\end{cases}.
\end{equation}
Geometrically, this corresponds to observe that the graph can be seen as sum of smaller graphs $G_{nm}=(V_{nm},E_{nm},W_{nm})$ with $V=\{v_n,v_m\}$ and only one weighted edge connecting the two vertices, whose Laplacian is exactly $M_{nm}$. Now, let us consider $\underline{x}\in \mathbb{R}^N$ an arbitrary vector, where $N$ is the order of $G$, then 
\begin{equation}
\label{eq_quadraticLG}
\begin{aligned}
\underline{x}^T L_G \underline{x} & = \sum_{ \{n,m\}\in E} \underline{x}^T  M_{nm}\underline{x}\\
&=\sum_{ \{n,m\}\in E}\underbrace{\omega_{nm}}_{\ge 0} (x_n-x_m)^2\ge 0,
\end{aligned}
\end{equation}
which shows that the quadratic form is positive semi-definite.

\item By construction, we have that
\begin{equation}
\sum_m (L_G)_{nm}=\sum_n (L_G)_{nm}=0.
\end{equation}

\item $L_G \underline{1}=0$, where $\underline{1}=(1,...,1)^T$. This simply follows from Eq.\eqref{eq_quadraticLG}.
\end{enumerate}

Other properties are more specific of certain classes of graphs.

\begin{definition}
Let $G=(V,E)$ be a simple graph, then two vertices $v,w\in V$ are connected if $G$ contains a sequence of edges (i.e. path) with $v,w$ as endpoints.  If every pair $\{v,w\}$ of different vertices is connected, then the graph is said to be connected.
\end{definition}

The connectivity of a graph is linked to the null space of its Laplacian via this important result
\begin{lemma}
Let $G=(V,E)$ be an undirected simple graph, then $G$ is connected if and only if $\dim\ker (L_G)=1$.
\end{lemma}

\begin{proof}
Let $\lambda_1=0$ be the zero eigenvalue of $L_G$. First, we show that if $\lambda_1$ has multiplicity $1$, then $G$ must be connected. To this end, let us assume that $G$ is disconnected, then upon reordering of the vertices the Laplacian takes the block-diagonal form
\begin{equation}
L_G=\left[
\begin{array}{cc}
L_1 & 0 \\
0 & L_2 \\
\end{array}
\right],
\end{equation}
where $L_1,L_2$ are the Laplacian of the disconnected graphs of $G$. Thus,
\begin{equation}
\begin{array}{ccc}
\left[
\begin{array}{c}
\underline{1} \\
0 \\
\end{array}
\right], &
\left[
\begin{array}{c}
0 \\
\underline{1} \\
\end{array}
\right], 
\end{array}
\end{equation}
are both eigenvectors of $L_G$ with zero eigenvalue, which contradicts the hypothesis.

Now, let us assume that $G$ is connected, then any $\underline{x}=(x_1,...,x_N)^T\in \ker(L_G)$ must satisfy $\underline{x}^T L_G \underline{x}=0$ and because of Eq.\eqref{eq_quadraticLG} we have that $x_n=x_m$ $\forall \{n,m\} \in E$, which implies that $x_n=x_m$ $\forall n,m$ since $G$ is connected, i.e. $\underline{x}\in \Span\{ \underline{1}\}$.

\end{proof}

Beside $\lambda_1=0$, the smallest non-zero eigenvalue $\lambda_2$ and the largest eigenvalue $\lambda_{\infty}$ play a crucial role in the characterization of the graphs. In the following, we will focus uniquely on $\lambda_2$, which is also called "algebraic connectivity" or Fiedler value \cite{Fiedler1973}.

\begin{definition}
Let $G=(V,E)$ be an undirected graph and let $v,w\in V$ be two connected vertices, then the distance $d(v,w)$ between them is the number of edges in the shortest path $P_{vw}$ with endpoints $v,w$.
\end{definition}

Notice that, the shortest path needs not to be unique.

\begin{definition}
Let $G=(V,E)$ be a connected undirected graph, then the diameter $\Diam(G)$ is the largest distance between any pair of its vertices.
\end{definition}

Conventionally, if $G$ is disconnected then $\Diam(G)=\infty$. In particular, we are going to explore now a useful relation between diameter of a graph and its Fiedler value(\cite{Mohar1991a,Mohar1991b}). First, we need

\begin{lemma}
\label{lemma_ce}
Let $G=(V,E)$ be a connected undirected simple graph with $|V|=N$. For any edge $e\in E$, let $c_e$ be the number of pairs $\{v,w\}$ of vertices whose shortest path $P_{vw}$ contains $e$ as edge, then $c_e\le \frac{N^2}{4}$.
\end{lemma}

\begin{proof}
Let us consider a generic edge $e\in E$ with endpoints the vertices $x,y$. If there exists a third vertex $z$ such that $x,y,z$ are adjacent and thus form a triangle, then the pairs  $\{x,z\}$ and  $\{y,z\}$
have shortest paths $P_{xz}$ and $P_{yz}$ that do not include $e$ as edge. Hence, the extreme situation which provides the largest number of paths containing $e$ is when cutting the edge separates the graph $G$ into two disconnected graphs $G_1, G_2$ with number of vertices $|V_1|, |V_2|=N-|V_1|$, where $|V_1|=1,..,N-1$ (since at least one vertex must be in both graphs). In this case,  any pair $\{v_1,v_2\}$ with $v_1\in V_1, v_2\in V_2$ has shortest path $P_{12}$ that contains $e$. The number of possible pairings of this sort is $|V_1||V_2|$, thus
\begin{equation}
c_e\le \max\limits_{|V_1|=1,...,N-1} |V_1| (N-|V_1|)\le \frac{N^2}{4}.
\end{equation}

\end{proof}

Now we can prove

\begin{theorem}[\textbf{Mohar-McKay bound}]
\label{th_moharbound}
Let $G=(V,E)$ be a connected undirected simple graph, with $|V|=N$, and let $\lambda_2$ be its Fiedler value, then 
\begin{equation}
\lambda_2\ge \frac{4}{N \Diam(G)}.
\end{equation}
\end{theorem}

\begin{proof}
Let $L_G$ be the graph Laplacian and let $\underline{x}\in \mathbb{R}^N$ be a generic vector, then for a connected graph we have that
\begin{equation}
\lambda_2=\inf\limits_{\underline{x}\perp \underline{1}}\frac{\underline{x}^T L_G \underline{x}}{||\underline{x}||^2}.
\end{equation}
In particular, making use of Eq. \eqref{eq_quadraticLG}, we get
\begin{equation}
\label{eq_lambda2}
\lambda_2=\inf\limits_{\underline{x}\perp \underline{1}}  \frac{\sum_{\{n,m\}\in E} (x_n-x_m)^2}{\sum_n^N x_n^2}\equiv \inf\limits_{\underline{x}\perp \underline{1}}  \frac{\sum_{e\in E}\Delta_e^2}{\sum_n^N x_n^2},
\end{equation}
where we introduced the notation $\Delta_e=(x_n-x_m)$ for an edge $e$ with endpoints the vertices $\{n,m\}$.

Now, let us consider the sum over all possible pairings (not only the ones in $E$), namely
\begin{equation}
\label{eq_sumtotpairs}
\begin{aligned}
\sum_{\{n,m\}} (x_n-x_m)^2& =\frac{1}{2}\sum_n^N\sum_{m\neq n}^{N-1}(x_n^2+x_m^2-2x_n x_m) \\
&= \frac{1}{2}\Big( \sum_n^N 2(N-1)x_n^2+2\sum_n^N x_n^2 \Big) \\
&= N\sum_n^N x_n^2 ,
\end{aligned}
\end{equation}
where we utilized the fact that $\sum_{m\neq n}^{N-1}x_m=-x_n$, which is nothing but the condition of orthogonality $\underline{x}\perp \underline{1}$.

As next step, we consider a pair of vertices $\{v_n,v_m\}$ and let $P_{nm}$ be the shortest path that connects them, with length $l_{nm}$ corresponding to the number of edges in the path; clearly, $l_{nm}\le \Diam(G)$. If we consider the vertices $w_1,w_2...$ inside the path $P_{nm}$, then

\begin{equation}
x_n-x_m=(x_n-x_{w_1})+ (x_{w_1}-x_{w_2})+ ...=\sum_{e\in P_{nm}}\Delta_e,
\end{equation}
where the sum runs over the edges of $P_{nm}$. Making use of the general inequality $(\sum_i^N a_i)^2\le N\sum_i^N a_i^2$, we have that
\begin{equation}
\begin{aligned}
(x_n-x_m)^2 =\Big( \sum_{e\in P_{nm}}\Delta_e \Big)^2 & \le l_{nm}\sum_{e\in P_{nm}}\Delta_e^2  \\
&\le \Diam(G)\sum_{e\in P_{nm}}\Delta_e^2,
\end{aligned}
\end{equation}
therefore summing over all possible pairings leads to
\begin{equation}
\label{eq_sumpairs}
\begin{aligned}
\sum_{\{n,m\}} (x_n-x_m)^2&\le \Diam(G)\sum_{\{n,m\}}\sum_{e\in P_{nm}}\Delta_e^2 \\
& =  \Diam(G)\sum_{e\in E}\Delta_e^2 c_e,
\end{aligned}
\end{equation}
with $c_e$ being the number of paths $P_{nm}$ which contain $e$ as edge. Finally, combining the bound in Lemma \ref{lemma_ce} with Eqs. \eqref{eq_lambda2}, \eqref{eq_sumtotpairs} and \eqref{eq_sumpairs}, we arrive at
\begin{equation}
N\le \Diam(G)\frac{N^2}{4}\lambda_2.
\end{equation}

\end{proof}

Since in a connected simple graph the largest diameter that can be attained is $N-1$ (in the case of a path graph), then we obtain the universal bound
\begin{equation}
\label{eq_boundl2uni}
\lambda_2\ge \frac{4}{N(N-1)}.
\end{equation}

We underline that many other inequalities of this sort are discussed in the literature; however, we will make use of the one in Eq. \eqref{eq_boundl2uni}, due to its simplicity and usefulness for our purposes.

\section{\label{sec: app_mastereq}Master equation driven 3-level system}

In this appendix, we derive microscopically the generator in Eq.\eqref{eq: gkls_qutrit}.
We consider the system Hamiltonian in Eq.\eqref{eq_Hqutrit} and the bosonic environment in Eq.\eqref{eq_bosonicbath}, with $A_1,A_2$ in the interaction Hamiltonian given in Eq.\eqref{eq_Aqutrit} and $\rho_E$ thermal state at inverse temperature $\beta$. As explained in the main text, we introduce the useful parameters $\delta=\frac{E_3-E_1}{2}, \eta=E_2-\frac{E_1+E_3}{2}$ by removing the identity $\frac{E_1+E_3}{2}$ from $H_S(t)$. In particular, $-\delta<\eta<\delta$. In addition, we are going to use the function $r(t)=\sqrt{\delta^2+h^2(t)}$ instead of $h(t)$.

Let us start from the eigendecomposition of the system Hamiltonian, which is given by 
\begin{equation}
H_S(t)=\sum_{n=\pm,2}\epsilon_n(t)|\epsilon_n(t)\rangle\langle \epsilon_n(t)|,
\end{equation}
where 
\begin{equation}
\label{eq_spectrqutrit}
\begin{aligned}
\epsilon_{\pm}(t)=& \pm r(t),  \\
\epsilon_{2}(t)=& \eta,
\end{aligned}
\end{equation}
and
\begin{equation}
\begin{aligned}
|\epsilon_+(t)\rangle=&\Big( \sin\frac{\theta(t)}{2}, 0, \cos\frac{\theta(t)}{2} \Big)^T,  \\
|\epsilon_-(t)\rangle=&\Big( \cos\frac{\theta(t)}{2}, 0, -\sin\frac{\theta(t)}{2} \Big)^T,  \\
|\epsilon_2(t)\rangle=& (0,1,0)^T, \\
\end{aligned}
\end{equation}
with $\cos\theta(t)= \frac{\delta}{r(t)}$. We notice that, for any $h(t)$ we have $\epsilon_+(t)\ge \delta$ and $\epsilon_-(t)\le -\delta$, which implies that
$\epsilon_+(t)>\epsilon_2(t)>\epsilon_-(t)$.

Making use of Eq.\eqref{eq_jumpandhls}, the time-evolution operator
\begin{equation}
\begin{aligned}
V(t)=\sum_{n=\pm,\epsilon} \exp\Big(-\imath\int\limits_0^t ds \epsilon_n(s)\Big) |\epsilon_n(t)\rangle\langle \epsilon_n(0)|
\end{aligned}
\end{equation}
allows us to compute
\begin{equation}
\label{eq_A3level}
\begin{aligned}
V^{\dagger}(t)A_1 V(t) =&  \sin\theta(t) (|3\rangle\langle 3|-|1\rangle\langle 1|) \\
&+\exp\Big(\imath\int\limits_0^t ds \Omega_{-+}(t) \Big)\cos\theta(t)|1\rangle\langle 3| \\
&+\exp\Big(-\imath\int\limits_0^t ds \Omega_{-+}(t) \Big)\cos\theta(t)|3\rangle\langle 1|, \\
V^{\dagger}(t)A_2 V(t) =& -\exp\Big(\imath\int\limits_0^t ds \Omega_{-2}(t) \Big)\sin\frac{\theta(t)}{2} |1\rangle\langle 2| \\
&-\exp\Big(-\imath\int\limits_0^t ds \Omega_{-2}(t) \Big)\sin\frac{\theta(t)}{2} |2\rangle\langle 1| \\
&+\exp\Big(\imath\int\limits_0^t ds \Omega_{2+}(t) \Big)\cos\frac{\theta(t)}{2} |2\rangle\langle 3| \\
&+\exp\Big(-\imath\int\limits_0^t ds \Omega_{2+}(t) \Big)\cos\frac{\theta(t)}{2} |3\rangle\langle 2|,\\
\end{aligned}
\end{equation}
where $\Omega_{nm}(t)=\epsilon_n(t)-\epsilon_m(t)$ for $n,m=\pm,2$.
This leads to the jumps in Eq.\eqref{eq_jumps}, corresponding to the Bohr frequencies 
\begin{equation}
\label{eq_bohrfreqqutrit}
\begin{aligned}
\Omega_0(t)=&0, \\
\Omega_1(t)=& \epsilon_+(t)-\epsilon_-(t)=2r(t), \\
\Omega_2(t)=& \epsilon_2(t)-\epsilon_-(t)=\eta+r(t), \\
\Omega_3(t)=& \epsilon_+(t)-\epsilon_2(t)=r(t)-\eta. \\
\end{aligned}
\end{equation}

The set of environment correlation functions extracted from Eq.\eqref{eq_bosonicbath} is 
\begin{equation}
\begin{aligned}
R_{i j}(x)=&\Tr_E\Big(e^{\imath H_{E}x}B_{i}e^{-\imath H_{E}x}B_{j}\rho_{E}\Big) \\
=&\delta_{ij}\int\limits_0^{\infty}dw J_{j}(w)\Big((1+\bar{n}(w))e^{-\imath w x}  + \bar{n}(w)e^{\imath w x}\Big),
\end{aligned}
\end{equation}
where $J_{j}(w)=\int\limits_0^{\infty} dk \delta(w-\omega_{j}(k))g_{j}^2(k)$ are the bath spectral densities for $j=1,2$ and $\bar{n}(w)=(e^{\beta w}-1)^{-1}$ is the thermal mean occupation number.
Hence, from Eq.\eqref{eq_dissratesdef} we obtain  
\begin{equation}
\begin{aligned}
\int\limits_{-\infty}^{\infty} dx R_{jj}(x)e^{\mp \imath x \omega } =\begin{cases}
2\pi J_{j}(\omega)(1+\bar{n}(\omega)) & \text{if $\omega >0$} \\
2\pi J_{j}(|\omega|)\bar{n}(|\omega| )& \text{if $\omega <0$}
\end{cases},
\end{aligned}
\end{equation}
from which we can derive the dissipation rates 
\begin{equation}
\label{eq_ratesqutrit}
\begin{aligned}
\gamma_0(t)=&\lim\limits_{\omega\rightarrow 0}2\pi J_1(\omega)\bar{n}(\omega), \\
\gamma_{+,\nu}(t)=&2\pi J_{\nu}(\Omega_\nu(t))\bar{n}(\Omega_\nu(t)) , \\
\gamma_{-,\nu}(t)=&2\pi J_{\nu}(\Omega_\nu(t))\Big[1+\bar{n}(\Omega_\nu(t))\Big], \\
\end{aligned}
\end{equation}
for $\nu=1,2,3$ and  $J_{3}(\omega)\equiv J_2(\omega)$. For simplicity, we can assume that both interactions mediated by $A_1, A_2$ are described by the same spectral density.

In the following, we will provide details on the derivation of several quantities in the main text.

\begin{enumerate}
\item \textbf{Spectral bound in Eq.\eqref{bound_lambdamax}}. For $\lambda_{\text{max}}(t)$, one simply obtains 
\begin{equation}
 \hspace*{\leftmargini}
\begin{aligned}
\lambda_{\text{max}}(t) \le& 2\Big[ \gamma_0(t)\sin^2\theta(t) + (\gamma_{+,1}+\gamma_{-,1})\cos^2\theta(t) \\
&+ (\gamma_{+,2}++\gamma_{-,2})\sin^2\frac{\theta(t)}{2}\\
&+(\gamma_{+,3}+\gamma_{-,3})\cos^2\frac{\theta(t)}{2}\Big].  \\
\end{aligned}
\end{equation}

\item \textbf{Spectral bound in Eq.\eqref{eq_boundslambdas}}. 
We observe that the weight $\omega_{12}(t)$ in Eq.\eqref{eq_weightqutrit}  can be zero at certain times, when $r(t)=\delta$. Therefore, we provide a lower bound on $\lambda_{\text{min}}(t)$ working with the subgraph with edge set $E=\{ \{1,3\}, \{2,3\}\}$, which is clearly connected at any time and with strictly positive weights, thus
\begin{equation}
\hspace*{\leftmargini}
\lambda_{\text{min}}(t)  \ge  \min\limits_{\nu=1,3}\{ \gamma_{+,\nu}||L_{\nu}(t)||_2^2, \gamma_{-,\nu}||L_{\nu}(t)||_2^2\}.
\end{equation}
Making use of the fact that $\gamma_{-,\nu}(t)\ge \gamma_{+,\nu}(t)$, for $\forall \nu$, we get
\begin{equation}
\hspace*{\leftmargini}
\lambda_{\text{min}}(t)  \ge  \min\limits_{\nu=1,3}\{ \gamma_{+,\nu}(t)\}\underbrace{\min\Big\{\cos^2{\theta}(t), \cos^2\frac{\theta(t)}{2}  \Big\} }_{=g(t)}.
\end{equation}

In order to compute $\min\limits_{\nu=1,3}\{ \gamma_{+,\nu}(t)\}$, one can observe that $\bar{n} \Omega_{\nu}\ge \bar{n}\big(\max\limits_{\nu=1,3}(\Omega_{\nu})\big)$, where $\max\limits_{\nu=1,3}(\Omega_{\nu})=\Omega_1(t)$. 
Moreover, plugging $\cos\theta(t)=\frac{\delta}{r(t)}$ in the expression for $g(t)$, one arrives at Eq.\eqref{eq_lambdaqutrit}.

\item \textbf{Instantaneous steady state}. 
Since the dissipation rates in Eq.\eqref{eq_ratesqutrit} obey the detailed balance condition in Eq.\eqref{eq_detailedbalance}, Lemma \ref{lemma_fixedth} guarantees that the unique instantaneous steady state is the thermal state $\omega_{\text{th}}(t)$, with respect to the Hamiltonian in Eq.\eqref{eq_avgH}, which reads in this case
\begin{equation}
H_{\text{avg}}=\left(
\begin{array}{ccc}
\epsilon_-(t) & 0  & 0 \\
 0 & \epsilon_2(t)  & 0 \\
0   & 0 &  \epsilon_+(t) \\
\end{array}
\right).
\end{equation}

Therefore, indicating with $\omega_n(t)=\frac{1}{Z(t)}e^{-\beta \epsilon_n(t)}$ the spectrum,
where $Z(t)=\sum_{n=\pm,2}e^{-\beta\epsilon_n(t)}$, we obtain Eq.\eqref{eq_qutritomega}.

The spectrum of $\dot{\omega}_{\text{th}}(t)$ is simply given by the derivative of $\omega_n(t)$. Therefore, utilizing the relation
\begin{equation}
\dot Z(t)=-\beta \dot{r}(t) (\omega_+(t)-\omega_-(t)) Z(t),
\end{equation}
one obtains
\begin{equation}
 \hspace*{\leftmargini}
\label{eq_dotomega}
\begin{aligned}
\dot{\omega}_+(t)=& - \beta \dot{r}(t) \omega_+(t)\big[  (\omega_-(t)- \omega_+(t))+1 \big], \\
\dot{\omega}_-(t)=&  \beta \dot{r}(t) \omega_-(t)\big[1-  (\omega_-(t)- \omega_+(t)) \big], \\
\dot{\omega}_2(t)=& - \beta \dot{r}(t) \omega_2(t)(\omega_-(t)- \omega_+(t)).\\
\end{aligned}
\end{equation}
which leads to Eq.\eqref{eq_qutritomegadot}. 

Finally, we want to compute now the smallest eigenvalue of $\dot{\omega}_{\text{th}}(t)$. Taking into account that $0<\omega_-(t)-\omega_+(t)<1$, we have the cases
\begin{equation}
 \hspace*{\leftmargini}
\begin{array}{cccc}
\dot{r}(t)>0 \Rightarrow & \dot{\omega}_+(t)<0, & \dot{\omega}_-(t)>0, & \dot{\omega}_2(t) <0, \\
\dot{r}(t)<0 \Rightarrow  & \dot{\omega}_+(t)>0, & \dot{\omega}_-(t)<0, & \dot{\omega}_2(t) >0, \\
\end{array}
\end{equation}
which provide Eq.\eqref{eq_qutritbounddot}.

\end{enumerate}

\bibliography{refs} 

\end{document}